\documentclass[11pt]{article}
\usepackage[margin=1in]{geometry}

\usepackage{amsmath,amssymb,amsthm}
\usepackage{bm}
\usepackage{tikz}
\usetikzlibrary{arrows.meta}
\usepackage{subcaption}
\usepackage{natbib}
\usepackage{microtype}
\usepackage[hidelinks]{hyperref}
\allowdisplaybreaks

\newtheorem{theorem}{Theorem}
\newtheorem{lemma}{Lemma}
\newtheorem{corollary}{Corollary}
\newtheorem{definition}{Definition}
\newtheorem{remark}{Remark}

\newcommand{\size}[1]{\ensuremath{|#1|}}
\newcommand{\ceil}[1]{\ensuremath{\lceil#1\rceil}}
\newcommand{\floor}[1]{\ensuremath{\lfloor#1\rfloor}}
\newcommand{\lra}[1]{\ensuremath{(#1)}}
\newcommand{\lrc}[1]{\ensuremath{\{#1\}}}
\newcommand{\lrC}[1]{\ensuremath{\left\{#1\right\}}}
\newcommand{\EE}[1]{\ensuremath{\mathbb{E}[#1]}}
\def\O{\mathcal{O}}
\def\P{\mathcal{P}}
\newcommand{\spv}{\mathrm{sp}}
\newcommand{\unv}{\mathrm{un}}
\newcommand{\Cs}{\pi^*_{\spv}}
\newcommand{\Cu}{\pi^*_{\unv}}
\newcommand{\OPT}{\operatorname{opt}}
\newcommand{\Ops}{\OPT_{\spv}}
\newcommand{\Opu}{\OPT_{\unv}}
\newcommand{\cost}{\operatorname{cost}}
\newcommand{\ind}{\mathbf{1}}

\title{Tight Approximations for Fixed-Order Vehicle Routing with Stochastic Demands}
\author{Jingyang Zhao\thanks{Kyung Hee University. Email: \texttt{jingyangzhao1020@gmail.com}.}
\and
Mingyu Xiao\thanks{University of Electronic Science and Technology of China. Email: \texttt{myxiao@gmail.com}.}
}
\date{}

\begin{document}
\maketitle
\begin{abstract}
We study the vehicle routing problem (VRP) with stochastic demands, where each client has an independent demand with a known distribution, and its realized demand is observed only on the first visit.
We consider a priori and restocking policies that fix the client order before observing demands and serve each client's entire demand before first visiting the next.
Gupta, Nagarajan, and Ravi (Oper. Res., 2012) developed randomized a priori $(\alpha+1)$-approximation and $(\alpha+2)$-approximation algorithms for splittable and unsplittable deliveries, respectively, where $\alpha$ is the approximation ratio of a metric traveling salesman problem (TSP) algorithm.
We improve these approximation ratios to $2$ and $3$ by using their replenishment rules with a TSP tour constructed via matroid parity.
We also obtain deterministic a priori $(2+\varepsilon)$-approximation and $(3+\varepsilon)$-approximation algorithms by compressing the distributions of cumulative demands to select an initial stock in polynomial time.
All guarantees are relative to the general adaptive optimum.
We then establish matching lower bounds of $2$ and $3$ that hold even for the more general class of randomized restocking policies.
Finally, we apply our matroid-parity construction to improve approximation ratios for several VRP variants with deterministic demands.
\end{abstract}


\section{Introduction}\label{sec:intro}
In vehicle routing problems (VRPs), the goal is to serve clients at minimum cost.
In the classical capacitated VRP (CVRP), demands are known before routes are constructed.
In applications such as fuel delivery, waste collection, and vending-machine replenishment, demands may be uncertain when a vehicle leaves the depot~\citep{journal:gendreau:EJOR:96,journal:florio:EJOR:23}.
The VRP with stochastic demands (VRPSD) models this uncertainty: the vehicle observes each client's demand on its first visit~\citep{journal:laporte:OR:02,journal:florio:TS:20}.
The vehicle must decide which client to visit next and when to return to the depot for replenishment.

For the VRPSD, the input consists of a metric space on a depot $r$ and a set $V$ of $n$ clients, together with a vehicle of capacity $Q\in\mathbb Z_{>0}$.
Each client $v$ has an independent demand $D_v\in\lrc{0,1,\ldots,Q}$, whose distribution is specified by its support values and probabilities. The realized demand is observed only when the vehicle first visits the client.
The vehicle starts and ends at the depot, may replenish there any number of times, and must serve all demands while minimizing expected travel cost.
In the \emph{splittable} variant, a client's demand may be served in several deliveries.
In the \emph{unsplittable} variant, it must be served in one delivery.

The literature distinguishes \emph{a priori}, \emph{restocking}, and \emph{reoptimization} approaches~\citep{journal:gendreau:TS:16}.
An a priori policy fixes the client order before observing demands and uses a prescribed shortage-response rule.
A restocking policy also fixes the order but may also choose preventive replenishments.
After fully serving a client, it chooses whether to visit the next client directly or replenish first, using the remaining stock and demand distributions.
A reoptimization policy allows the remaining order to change as demands are observed.

We study a priori and restocking policies, and compare their expected costs with the general adaptive optimum, which may change the client order or defer service after observing a demand.
For randomized policies, the expectation includes both demands and the algorithm's random choices.

\citet{journal:gupta:OR:12}, hereafter GNR, developed randomized a priori $(\alpha+1)$-approximation and $(\alpha+2)$-approximation algorithms for splittable and unsplittable deliveries, respectively, where $\alpha$ is the approximation ratio of the metric traveling salesman problem (TSP) algorithm used to construct the order.
Their algorithms construct the order and use a random initial stock with a prescribed shortage response.
These bounds hold for every fixed demand vector, with expectation taken over the initial stock.
\citet{conference:zhao:ISAAC:24} later obtained a randomized $(\alpha+1.75)$-approximation for the unsplittable VRPSD using deferred service, which allows a broader class of policies.

\subsection{Related Work}
We refer to \citet{journal:gendreau:EJOR:96,journal:gendreau:TS:16,journal:florio:EJOR:23} for surveys of stochastic vehicle routing.

\paragraph{A priori routing and restocking.}
Heuristics for a priori routing and preventive replenishments have been studied by \citet{journal:bertsimas:OR:92} and \citet{journal:yang:TS:00}, respectively.
For a fixed client order, optimal restocking decisions can be computed by stochastic dynamic programming, following the classical approach of \citet{journal:yee:NRLQ:80}. See also \citet{journal:florio:EJOR:20}.
\citet{journal:florio:EJOR:20} jointly optimized the client order and restocking decisions for a single vehicle, using a mixed-integer formulation and a heuristic.
\citet{journal:bertazzi:OR:20} bounded the worst-case benefit of restocking over a priori recourse.

\paragraph{Exact algorithms.}
\citet{journal:laporte:OR:02} developed an integer L-shaped algorithm with corrective roundtrips.
\citet{journal:christiansen:ORL:07} proposed a branch-and-price algorithm using set partitioning and dynamic programming for pricing.
For preventive replenishments, \citet{journal:louveaux:TS:18} developed a branch-and-cut algorithm based on the integer L-shaped method, and \citet{journal:florio:TS:20} proposed a branch-price-and-cut algorithm under optimal restocking.
\citet{journal:florio:EJOR:23} strengthened the latter approach by pricing elementary routes and developed a Bayesian model for replenishment under correlated demands.
\citet{journal:parada:EJOR:24} proposed a disaggregated integer L-shaped method for routing with detour-to-depot recourse.
These methods use different fleet, demand, and service assumptions.
We study polynomial-time approximation for a single vehicle.

\paragraph{Reoptimization.}
\citet{journal:secomandi:OR:01} developed a rollout policy that evaluates routing decisions using their expected future costs.
\citet{journal:secomandi:OR:09} studied partial reoptimization using a finite-horizon Markov decision process.
\citet{journal:goodson:OR:13} later developed rollout policies for multiple vehicles with route-duration limits.
\citet{journal:goodson:TS:16} incorporated fixed-route restocking values into rollout policies to evaluate routing decisions.
\citet{journal:florio:TS:22} considered limited reoptimization by allowing switches between adjacent clients.

\paragraph{Approximation algorithms for the CVRP.}
\citet{journal:haimovich:MOR:85} introduced iterated tour partitioning, starting from a TSP tour and dividing it into capacity-feasible depot tours.
Given an $\alpha$-approximate TSP tour, \citet{journal:altinkemer:TS:90} obtained an $(\alpha+1)$-approximation for the splittable CVRP, and \citet{journal:altinkemer:ORL:87} obtained an $(\alpha+2)$-approximation for the unsplittable CVRP.
Later, \citet{journal:blauth:MP:23} improved the ratios to $\alpha+1-\varepsilon_\alpha$ for splittable deliveries and $\alpha+2-2\varepsilon_\alpha$ for unsplittable deliveries, where $\varepsilon_\alpha>0$ depends only on $\alpha$.
For the unsplittable CVRP, \citet{journal:friggstad:MOR:25} proposed an $(\alpha+1+\ln2+\varepsilon)$-approximation algorithm using linear programming (LP).
They also combined their algorithm with the method of \citet{journal:blauth:MP:23} to improve the ratio by a small positive constant.
\citet{journal:zhao:arXiv:26} later obtained approximation ratios of $3.1759$ for general capacity and $3.0897$ for fixed capacity.
\citet{journal:ebert:arXiv:26} further improved the approximation ratio for general capacity to $3.159$.


\subsection{Our Results}
For the VRPSD, we first obtain randomized a priori $2$-approximation and $3$-approximation algorithms for splittable and unsplittable deliveries, respectively.
These improve the previous randomized a priori guarantees of $\alpha+1$ and $\alpha+2$~\citep{journal:gupta:OR:12}.
Our algorithms build on GNR's replenishment rules but replace the $\alpha$-approximate TSP tour with one obtained via the weighted graphic matroid parity problem (WGMP)~\citep{journal:iwata:SICOMP:22}.

We then obtain deterministic a priori $(2+\varepsilon)$-approximation and $(3+\varepsilon)$-approximation algorithms for splittable and unsplittable deliveries, respectively.
We retain the shortage-response rules of the randomized a priori algorithms.
Using the same TSP tour, a direct dynamic program selects an optimal initial stock for the splittable policy, yielding approximation ratios of $2$ and $3$ for the two variants, but its running time is pseudopolynomial in $Q$.
To obtain polynomial running time, we compress the distributions of cumulative demand and select the initial stock using the approximate expected costs.

We also establish matching lower bounds of $2$ and $3$ for a priori and restocking policies.
For every $\varepsilon>0$, there exists an instance with two-point demands on which every randomized a priori or restocking policy has an approximation ratio greater than $2-\varepsilon$ for splittable deliveries and $3-\varepsilon$ for unsplittable deliveries.
These lower bounds hold even if all demands are revealed after the client order is fixed.
Thus, even the more general restocking policies cannot achieve approximation ratios below $2$ and $3$ for splittable and unsplittable deliveries, respectively.

Finally, we show that our matroid-parity construction also improves approximation ratios for several VRP variants with deterministic demands.

\paragraph{Overview of the techniques.}
Let $c$ denote the metric cost function.
For any demand vector $\bm{q}\in\lrc{0,1,\ldots,Q}^V$, let $\Cs(\bm{q})$ be the minimum cost of a splittable routing when $\bm{q}$ is known in advance and all clients in $V$ must be visited.
Let $\ell(\bm{q})=(2/Q)\sum_{v\in V}c(r,v)q_v$ be the radial lower bound.
Our algorithm uses only the metric to construct a TSP tour $T$ satisfying $c(T)\le2\Cs(\bm{q})-\ell(\bm{q})$ for every $\bm{q}$.
To construct $T$, we first build an instance of the WGMP and compute a minimum-weight parity basis using the algorithm of \citet{journal:iwata:SICOMP:22}.
From this basis, we obtain a TSP tour by keeping certain cycles through the depot once, doubling the remaining edges, and shortcutting.
To bound its cost, we use bipartite matching to assign some clients to distinct depot tours of a splittable routing, and assign the remaining clients to tours that visit them.
These assignments yield a partition of the clients such that twice the sum of the maximum depot distances in its parts is at least $\ell(\bm{q})$.
We then use the depot tours of this partition to construct a parity basis of weight at most $2\Cs(\bm{q})-\ell(\bm{q})$.
Combining this tour bound with the expected cost bounds $c(T)+\ell(\bm{q})$ and $c(T)+2\ell(\bm{q})$ for the a priori policies~\citep{journal:gupta:OR:12}, we obtain our approximation ratios.

For deterministic a priori algorithms, we first use dynamic programming to choose the initial stock.
The direct implementation considers all $Q$ possible stock levels and is pseudopolynomial in $Q$.
To obtain polynomial running time, we express the expected extra cost using the distributions of cumulative demand modulo $Q$.
We then compress these distributions by rounding their cumulative probabilities, retaining only $\O(n/\varepsilon)$ support values each.
The resulting cost estimates have small errors for every initial stock and the same average as the true expected costs.
Thus, choosing a stock with the minimum approximate cost incurs only a small additional expected cost.
Combining this bound with our TSP tour bound yields the desired approximation ratios.


\section{Preliminaries}\label{sec:model}

\subsection{The Model and the Policy Class}
Let $G=(V\cup\lrc{r},E)$ be a complete graph, where $r$ is the depot and $V=\lrc{v_1,\ldots,v_n}$ is the client set.
A vehicle of capacity $Q\in\mathbb Z_{>0}$ starts and ends at $r$ and may replenish there any number of times.
Each client $v\in V$ has a random demand $D_v\in\lrc{0,1,\ldots,Q}$ with $\Pr(D_v>0)>0$.
The demands are mutually independent.
Let $\bm{D}=(D_v)_{v\in V}$ denote the random demand vector.
The demand of each client is observed only when the vehicle first visits it.
The input includes each client's demand distribution, specified by its finite support $\operatorname{supp}(D_v)=\lrc{d_{v,1},\ldots,d_{v,s_v}}\subseteq\lrc{0,\ldots,Q}$ and probabilities $p_{v,j}=\Pr(D_v=d_{v,j})>0$, with $\sum_{j=1}^{s_v}p_{v,j}=1$.
Let $S=\sum_{v\in V}s_v$ be the total support size.

Let $c$ be a metric cost function with $c(x,x)=0$, $c(x,y)=c(y,x)$, and $c(x,y)\le c(x,z)+c(z,y)$ for all $x,y,z\in V\cup\lrc{r}$.
For each $v\in V\cup\lrc{r}$, let $\rho_v=c(r,v)$.
A \emph{walk} in $G$ is a sequence of vertices in which consecutive vertices are joined by an edge and vertices may be repeated.
A walk is \emph{closed} if its first and last vertices are the same.
For a walk or multigraph $H$, let $E(H)$ denote its edge multiset and let $c(H)=\sum_{e\in E(H)}c(e)$ denote its total edge cost, counting multiplicities.
A \emph{depot tour} is a closed walk from $r$ with no intermediate visits to $r$.
A TSP tour on $V\cup\lrc{r}$ is a depot tour that visits every client in $V$.

A feasible policy serves all realized demands almost surely and respects the vehicle capacity.
Splittable service allows several deliveries to a client, whereas unsplittable service requires one delivery of its entire demand.
A general adaptive policy may defer service after visiting a client and choose subsequent clients based on the observed history.
Since $\Pr(D_v>0)>0$ for every $v\in V$, every feasible policy must visit every client in $V$~\citep{journal:gupta:OR:12}.

\begin{definition}[A priori and restocking policies]\label{def:class}
In both classes, the first visits to clients follow an order
fixed before observing any demand, and each client's entire
demand is served before the next client's first visit.
In either class, the policy may replenish at the depot between clients or while serving the current client.
An a priori policy uses a prescribed shortage-response rule, whereas a restocking policy may additionally choose preventive replenishments.
\end{definition}

Let $\Ops$ and $\Opu$ denote the optimal expected costs over feasible adaptive policies for splittable and unsplittable service, respectively.
For $\rho\ge1$, a $\rho$-approximation algorithm is a polynomial-time algorithm that produces a feasible policy with expected travel cost at most $\rho$ times the corresponding adaptive optimum on every instance.
The expectation is over the demands and, if the algorithm is randomized, its random choices.

\subsection{Lower Bounds on Policy Cost}
We use optimal routing costs for known demands to lower bound expected policy cost.
For $\bm{q}\in\lrc{0,1,\ldots,Q}^V$, let $\Cs(\bm{q})$ and $\Cu(\bm{q})$ be the minimum costs of splittable and unsplittable routings when the entire demand vector $\bm{q}$ is known in advance.
Both routings must visit every client in $V$, including those with $q_v=0$.
For each $\emptyset\neq B\subseteq V$, let $\tau(B)$ denote the minimum cost of a TSP tour through the depot and all clients in $B$.
The \emph{radial lower bound} is $\ell(\bm{q})=(2/Q)\sum_{v\in V}\rho_vq_v$.
We have the following result.

\begin{lemma}\label{lem:benchmarks}
$\Cs(\bm{q})\le\Cu(\bm{q})$ and, for $\nu\in\lrc{\spv,\unv}$, $\max\lrc{\tau(V),\ell(\bm{q})}\le \pi^*_\nu(\bm{q})$ and $\EE{\pi^*_\nu(\bm{D})}\le\OPT_{\nu}$.
\end{lemma}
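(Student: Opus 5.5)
We prove the three parts of Lemma~\ref{lem:benchmarks} separately. The first two are deterministic statements about known demands; the third compares an adaptive policy with the known-demand optimum realization by realization.

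\emph{Splittable versus unsplittable.} Every unsplittable routing for $\bm{q}$ is also a feasible splittable routing, since one delivery per client is a special case of several. Hence $\Cs(\bm{q})\le\Cu(\bm{q})$. Because of this, both parts of the lower bound $\max\lrc{\tau(V),\ell(\bm{q})}\le\pi^*_\nu(\bm{q})$ only need to be shown for $\nu=\spv$.

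\emph{TSP bound.} Take an optimal splittable routing. It is a collection of depot tours that together visit every client in $V$. Concatenating them at $r$ gives one closed walk from $r$ through all of $V$. Shortcutting repeated vertices, including intermediate visits to $r$, gives a TSP tour on $V\cup\lrc{r}$, and by the triangle inequality its cost is at most the routing cost. So $\tau(V)\le\Cs(\bm{q})$.

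\emph{Radial bound.} Again take an optimal splittable routing with depot tours $R_1,\dots,R_k$. Let $q_{v,i}\ge0$ be the amount delivered to $v$ on tour $R_i$, so that $\sum_i q_{v,i}=q_v$ and $\sum_v q_{v,i}\le Q$. Every client $v$ served by $R_i$ lies on a closed walk through $r$, so $c(R_i)\ge2\rho_v$ by the triangle inequality. Therefore
\[
c(R_i)\ge 2\max_{v:\,q_{v,i}>0}\rho_v\ge \frac{2}{Q}\sum_v\rho_v q_{v,i}.
\]
Summing over $i$ gives $\Cs(\bm{q})\ge\ell(\bm{q})$.

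\emph{Expectation bound.} This is the main obstacle, because an adaptive policy learns demands only gradually. Fix any feasible adaptive policy and a realization $\bm{D}=\bm{q}$; if the policy is randomized, also fix its random choices. The executed trajectory is a closed walk from $r$ that is split at depot visits into depot tours. It visits every client, since $\Pr(D_v>0)>0$ forces every feasible policy to visit every client. It respects capacity between replenishments and serves all of $\bm{q}$, splittably or unsplittably as required.

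This trajectory is therefore a feasible routing for the known vector $\bm{q}$ in the corresponding variant. One subtlety is that the walk may revisit clients or pass through clients without delivering. Such visits only add cost, and the shortcutting argument above turns the walk into a routing of no greater cost. Hence the cost of the trajectory is at least $\pi^*_\nu(\bm{q})$. Taking expectations over $\bm{D}$ and the policy's randomness gives that the policy's expected cost is at least $\EE{\pi^*_\nu(\bm{D})}$. Taking the infimum over policies gives $\EE{\pi^*_\nu(\bm{D})}\le\OPT_\nu$.
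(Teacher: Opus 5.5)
Your proof is correct and follows the paper's argument. Splitting is a relaxation. The TSP and radial bounds hold for every routing that visits all clients. The realized trajectory of any feasible adaptive policy is a feasible known-demand routing, almost surely.

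The only difference is that you prove the TSP and radial bounds directly, by concatenating and shortcutting and by the per-tour inequality $c(R_i)\ge 2\max_{v:\,q_{v,i}>0}\rho_v\ge(2/Q)\sum_v\rho_v q_{v,i}$, whereas the paper cites GNR for them.
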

\begin{proof}
The first inequality follows by allowing split deliveries.
The second follows from the TSP and radial lower bounds of \citet{journal:gupta:OR:12}, since every routing must visit all clients.
Finally, for any feasible adaptive policy of variant $\nu$, its realized routing has cost at least $\pi^*_\nu(\bm{D})$ almost surely, so taking expectations yields the last inequality.
\end{proof}

\section{Randomized A Priori Algorithms}\label{sec:random}
In this section, we present randomized a priori $2$-approximation and $3$-approximation algorithms for splittable and unsplittable deliveries, respectively.

\subsection{The A Priori Policies}
For any TSP tour $T=(r,v_1,\ldots,v_n,r)$, we use the replenishment rules of \citet{journal:gupta:OR:12} with a shortcut in the unsplittable case.
Both policies start with stock $U$, sampled uniformly from $\lrc{0,\ldots,Q-1}$. The policies follow the fixed order and serve each demand immediately if the remaining stock suffices. For a shortage, they use the following rules, respectively.

\paragraph{Splittable service.}
Suppose the vehicle visits a client with stock $s$ and observes demand $q>s$. It serves $s$ units of demand, makes one roundtrip to the depot to replenish to capacity $Q$, and serves the remaining demand $q-s$. Its stock is then $Q+s-q$.

\paragraph{Unsplittable service.}
If $q>s$, the vehicle makes a roundtrip to replenish to $q$ and serves the whole demand. It then visits the depot to replenish to $Q+s-q$ and travels directly to the next client, or ends at the depot after the last client. Its stock is again $Q+s-q$.

For the a priori policy on $T$, let $\cost_\nu(T,\bm{q})$ denote its travel cost on demand vector $\bm{q}$, where $\nu\in\lrc{\spv,\unv}$ denotes the delivery variant.
Unlike the unsplittable policy of \citet{journal:gupta:OR:12}, which returns to the current client after the second depot visit, our policy travels directly to the next client. By the triangle inequality, this cannot increase the cost. 
We obtain the following bounds.

\begin{lemma}[{\citet[Theorems~3.1 and~4.1]{journal:gupta:OR:12}}]\label{lem:gnr-cost}
For any fixed TSP tour $T$ and $\bm{q}\in\lrc{0,1,\ldots,Q}^V$, it holds that $\mathbb E_U[\cost_{\spv}(T,\bm{q})]=c(T)+\ell(\bm{q})$ and $\mathbb E_U[\cost_{\unv}(T,\bm{q})]\le c(T)+2\ell(\bm{q})$.
\end{lemma}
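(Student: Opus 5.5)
We will prove the two bounds by decomposing the travel cost into the cost of the tour $T$ plus the extra cost of the depot detours triggered by shortages, and then averaging over the uniform initial stock $U$. For client $v_i$ let $W_i=\sum_{j\le i}q_{v_j}$ be the cumulative demand. The first step is to describe the stock trajectory. Under both rules the stock after serving $v_i$ is $Q+s-q$ whenever a shortage occurs, which is exactly the value it would have if the vehicle held an unbounded supply reduced modulo $Q$. An induction on $i$ should therefore show that the stock after serving $v_i$ equals $(U-W_i)\bmod Q$, taken in $\lrc{0,\ldots,Q-1}$. The only care needed is the boundary case in which the demand exactly exhausts the stock: $q=s$ is not a shortage and leaves stock $0$. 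This convention is consistent with the representative in $\lrc{0,\ldots,Q-1}$, so the next positive demand triggers a shortage, as it should.

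Second, we characterize shortages. Client $v_i$ causes a shortage iff $q_{v_i}>(U-W_{i-1})\bmod Q$. Since $q_{v_i}\le Q$, at most one shortage occurs per client. Such a shortage happens iff the interval $(W_{i-1},W_i]$ contains an integer congruent to $U$ modulo $Q$; more precisely, iff the interval $[W_{i-1}-U, W_i-U)$ crosses a point where the stock wraps. The number of residues $U\in\lrc{0,\ldots,Q-1}$ producing this event is exactly $q_{v_i}$ when $q_{v_i}\le Q$ (the case $q_{v_i}=Q$ gives every $U$, which is correct). Hence $\Pr_U[\text{shortage at } v_i]=q_{v_i}/Q$.

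For the splittable case, each shortage at $v$ adds a roundtrip costing $2\rho_v$, and the rest of the route is exactly $T$. Linearity of expectation then gives $\mathbb E_U[\cost_{\spv}]=c(T)+\sum_v 2\rho_v q_v/Q=c(T)+\ell(\bm{q})$.

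For the unsplittable case, a shortage at $v_i$ replaces the edge $(v_i,v_{i+1})$ by the detour $v_i\to r\to v_i\to r\to v_{i+1}$. Its extra cost is $3\rho_{v_i}+\rho_{v_{i+1}}-c(v_i,v_{i+1})$, and this is at most $4\rho_{v_i}$ by the triangle inequality $\rho_{v_{i+1}}\le\rho_{v_i}+c(v_i,v_{i+1})$. For the last client the next vertex is $r$, where the bound is immediate. Multiplying by $q_{v_i}/Q$ and summing gives the bound $c(T)+2\ell(\bm{q})$.

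The main obstacle is the exact-counting step: verifying the invariant stock $=(U-W_i)\bmod Q$ together with the count of exactly $q_v$ shortage-inducing residues. Edge cases need checking here, namely zero demands, demand equal to $Q$, and stock exactly $0$. We will handle them by stating the invariant with representatives in $\lrc{0,\ldots,Q-1}$ and checking the shortage condition $q>s$ directly against it. This gives equality rather than only an upper bound in the splittable case.
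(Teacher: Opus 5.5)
Your proposal is correct and follows essentially the same route as the paper: the invariant that the stock before $v_i$ is $(U-\sum_{j<i}q_{v_j})\bmod Q$ gives a shortage at $v_i$ with probability $q_{v_i}/Q$, each shortage adds exactly $2\rho_{v_i}$ (splittable) or at most $4\rho_{v_i}$ by the triangle inequality (unsplittable), and linearity of expectation finishes. One small slip in your interval remark: with stock in $\lrc{0,\ldots,Q-1}$ the shortage event is that $[W_{i-1},W_i)$, not $(W_{i-1},W_i]$, contains an integer congruent to $U$ modulo $Q$ (for example, $U=0$ and $W_{i-1}=0$ give stock $0$ and hence a shortage); this does not change the count of $q_{v_i}$ residues, and your direct check of $q>s$ against the invariant is the clean way to obtain that count.
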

\begin{proof}
For fixed $\bm{q}$, the stock before visiting $v_i$ is $(U-\sum_{j<i}q_{v_j})\bmod Q$ in both policies. Since $U$ is uniformly distributed, a shortage occurs at $v_i$ with probability $q_{v_i}/Q$. 
Moreover, each shortage incurs an extra cost of $2\rho_{v_i}$ in the splittable policy and at most $4\rho_{v_i}$ in the unsplittable policy, where the latter bound follows from the triangle inequality. Thus, the claimed bounds follow by linearity of expectation.
\end{proof}

The algorithms of \citet{journal:gupta:OR:12} use an $\alpha$-approximate TSP tour $T$ satisfying $c(T)\le\alpha\tau(V)$.
By Lemmas~\ref{lem:benchmarks} and~\ref{lem:gnr-cost}, this yields randomized a priori $(\alpha+1)$-approximation and $(\alpha+2)$-approximation algorithms for splittable and unsplittable deliveries, respectively.

\subsection{Constructing a TSP Tour via Matroid Parity}\label{sec:geometry}
Using only the metric, we construct a TSP tour $T$ satisfying $c(T)\le2\Cs(\bm{q})-\ell(\bm{q})$ for every $\bm{q}\in\lrc{0,1,\ldots,Q}^V$.
Our algorithm first constructs an instance of the WGMP and computes a minimum-weight parity basis.
It then uses this basis to construct cycles through the depot, keep each cycle once, and double the remaining edges.
Finally, a TSP tour $T$ is obtained by shortcutting.

\subsubsection{The WGMP Instance}
We first review the WGMP~\citep{journal:iwata:SICOMP:22}.
An instance of the WGMP consists of a multigraph whose edges are partitioned into pairs called \emph{lines}, each with an assigned weight.
A \emph{parity basis} is a set of lines such that the union of their edges forms a spanning tree in each connected component of the multigraph.
The goal is to seek a parity basis of minimum total line weight.

We start with two vertex-disjoint complete graphs $G_1$ and $G_2$ with vertex labels $X\cup\lrc{r_1}$ and $X\cup\lrc{r_2}$, respectively, where $X=\lrc{v^-,v^+:v\in V}$ contains two distinct copies of each client.
An edge $r_i x$ with $x\in X$ is called a \emph{depot spoke}.
In $G_1$, we replace each depot spoke with $2n-1$ parallel edges, while $G_2$ remains unchanged.
Thus, $G_1$ has $6n^2-3n$ edges and $G_2$ has $2n^2+n$ edges.

Let $\theta:X\to V$ map each copy to its client.
In each graph, we assign an edge $xy$ on $X$ the cost $c_X(x,y)=c(\theta(x),\theta(y))$, and each depot spoke $r_i x$ the cost $\rho_{\theta(x)}$.
In particular, the edge $v^-v^+$ has cost zero, although its endpoints are distinct.
We extend the notation $c(H)$ to these graphs.
We group the edges into the following three types of lines, as illustrated in Figure~\ref{fig:parity-instance}.

\begin{figure}[t]
\centering
\caption{The three types of lines in the constructed instance.}\label{fig:parity-instance}
\vspace{-6pt}
\begin{subfigure}[t]{0.32\linewidth}
\centering
\caption{Client line}
\label{fig:parity-instance-a}
\begin{tikzpicture}[x=1.12cm,y=0.85cm,
  every node/.style={font=\footnotesize,inner sep=1pt},
  vertex/.style={circle,draw,fill=white,inner sep=0pt,minimum size=4.8pt},
  depot/.style={rectangle,draw,fill=black,inner sep=0pt,minimum size=5pt},
  edge/.style={line width=0.6pt},
  graph box/.style={draw=black!40,line width=0.35pt,rounded corners=1.5pt}]

\draw[graph box] (0.10,-0.35) rectangle (1.55,1.50);
\draw[graph box] (1.80,-0.35) rectangle (3.25,1.50);
\node at (0.825,1.10) {$G_1$};
\node at (2.525,1.10) {$G_2$};
\node[vertex,label=below:$x$] (cx1) at (0.35,0.40) {};
\node[vertex,label=below:$y$] (cy1) at (1.30,0.40) {};
\node[vertex,label=below:$x$] (cx2) at (2.05,0.40) {};
\node[vertex,label=below:$y$] (cy2) at (3.00,0.40) {};
\draw[edge] (cx1)--(cy1) (cx2)--(cy2);
\node at (1.675,-0.73) {$2c_X(x,y)$};
\end{tikzpicture}
\end{subfigure}
\begin{subfigure}[t]{0.32\linewidth}
\centering
\caption{Depot line}
\label{fig:parity-instance-b}
\begin{tikzpicture}[x=1.12cm,y=0.85cm,
  every node/.style={font=\footnotesize,inner sep=1pt},
  vertex/.style={circle,draw,fill=white,inner sep=0pt,minimum size=4.8pt},
  depot/.style={rectangle,draw,fill=black,inner sep=0pt,minimum size=5pt},
  edge/.style={line width=0.6pt},
  graph box/.style={draw=black!40,line width=0.35pt,rounded corners=1.5pt}]

\draw[graph box] (4.10,-0.35) rectangle (6.80,1.50);
\node at (4.45,1.20) {$G_1$};
\node[depot,label=above:$r_1$] (dr) at (5.45,1.05) {};
\node[vertex,label=below:$x$] (dx) at (4.70,0.15) {};
\node[vertex,label=below:$y$] (dy) at (6.20,0.15) {};
\draw[edge] (dr)--(dx) (dr)--(dy);
\node at (5.45,-0.73) {$2\rho_{\theta(x)}+2\rho_{\theta(y)}$};
\end{tikzpicture}
\end{subfigure}
\begin{subfigure}[t]{0.32\linewidth}
\centering
\caption{Marker line}
\label{fig:parity-instance-c}
\begin{tikzpicture}[x=1.12cm,y=0.85cm,
  every node/.style={font=\footnotesize,inner sep=1pt},
  vertex/.style={circle,draw,fill=white,inner sep=0pt,minimum size=4.8pt},
  depot/.style={rectangle,draw,fill=black,inner sep=0pt,minimum size=5pt},
  edge/.style={line width=0.6pt},
  graph box/.style={draw=black!40,line width=0.35pt,rounded corners=1.5pt}]

\draw[graph box] (7.85,-0.35) rectangle (10.55,1.50);
\node at (8.20,1.20) {$G_2$};
\node[depot,label=above:$r_2$] (mr) at (9.20,1.05) {};
\node[vertex,label=below:$v^-$] (mminus) at (8.45,0.15) {};
\node[vertex,label=below:$v^+$] (mplus) at (9.95,0.15) {};
\draw[edge] (mr)--(mminus) (mr)--(mplus);
\node at (9.20,-0.73) {$-2\rho_v$};
\end{tikzpicture}
\end{subfigure}
\end{figure}

\begin{enumerate}
\item \emph{Client lines.}
For each $\lrc{x,y}\subseteq X$ with $x\ne y$, we pair the two occurrences of the edge $xy$, one in each graph, in a line of weight $2c_X(x,y)$.

\item \emph{Depot lines.}
For each $\lrc{x,y}\subseteq X$ with $x\ne y$, we pair one unused occurrence of each spoke $r_1x$
and $r_1y$ in $G_1$ in a line of weight $2\rho_{\theta(x)}+2\rho_{\theta(y)}$.

\item \emph{Marker lines.}
For each client $v$, we pair the spokes $r_2v^-,r_2v^+$ in $G_2$ in a line of weight $-2\rho_v$.
\end{enumerate}

Each edge belongs to exactly one line.
The multigraph of our WGMP instance is the disjoint union $G_1\cup G_2$, with $4n+2$ vertices, $8n^2-2n$ edges, and $4n^2-n$ lines.
To obtain a parity basis, we can select all marker lines and the $n$ depot lines pairing $v^-$ with $v^+$.
The selected edges form a star in each graph.
Since each graph has $2n+1$ vertices, every parity basis consists of $4n$ edges, or $2n$ lines.
We compute a minimum-weight parity basis for this instance using the algorithm of \citet{journal:iwata:SICOMP:22}.

In the instance, the line weights may be negative.
We can add $2\max_{v\in V}\rho_v$ to every line weight to make the weights nonnegative.
Since every parity basis has $2n$ lines, this increases every basis weight by the same amount $4n\max_{v\in V}\rho_v$ and preserves a minimum-weight parity basis.

\subsubsection{Constructing a TSP Tour from a Parity Basis}
We now construct a TSP tour from the selected parity basis.
Let $F$ be the forest with vertex set $X$ and the selected client edges.
Let $E_{\mathrm{d}}$ and $E_{\mathrm{m}}$ be the selected depot spokes in $G_1$ and $G_2$, respectively.
Let $V_{\mathrm{m}}$ be the set of clients whose marker lines are selected.
The client edges appear in both graphs.
Since the selected lines form a parity basis, $F\cup E_{\mathrm{d}}$ and $F\cup E_{\mathrm{m}}$ are spanning trees.
The selected marker lines pair the components of $F$, as shown by the following lemma.

\begin{lemma}\label{lem:pairing}
Each component of $F$ contains exactly one endpoint of a spoke in $E_{\mathrm{d}}$ and one in $E_{\mathrm{m}}$. The pairs $v^-,v^+$ for $v\in V_{\mathrm{m}}$ pair distinct components of $F$.
\end{lemma}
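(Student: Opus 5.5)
Since $F\cup E_{\mathrm{d}}$ is a spanning tree of $G_1$ and $F$ contains no edge at $r_1$, the plan is to contract each component of $F$ to a single node; the spokes in $E_{\mathrm{d}}$ then form a spanning tree on these component nodes together with $r_1$. That contracted tree is a star centered at $r_1$, because every spoke joins $r_1$ to a client copy and $F$ has no edges incident to $r_1$. Acyclicity of $F\cup E_{\mathrm{d}}$ rules out two spokes reaching the same component: together with a path in $F$ joining their endpoints, they would close a cycle. Connectivity forces every component to be reached by at least one spoke. Hence each component of $F$ contains exactly one endpoint of a spoke in $E_{\mathrm{d}}$. Running the identical argument in $G_2$ with $F\cup E_{\mathrm{m}}$ and $r_2$ shows that each component of $F$ contains exactly one endpoint of a spoke in $E_{\mathrm{m}}$.

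For the second claim, the only spokes at $r_2$ in $G_2$ come from marker lines, since client lines use only edges on $X$. So $E_{\mathrm{m}}=\lrc{r_2v^-,r_2v^+:v\in V_{\mathrm{m}}}$, and each selected marker line contributes the two spokes $r_2v^-$ and $r_2v^+$. By the first part, no component of $F$ contains two $E_{\mathrm{m}}$-endpoints. In particular $v^-$ and $v^+$ lie in distinct components, and distinct marker lines hit disjoint sets of components. Since every component contains exactly one such endpoint, the components are partitioned into pairs $\lrc{C(v^-),C(v^+)}$ for $v\in V_{\mathrm{m}}$, where $C(x)$ denotes the component of $F$ containing $x$. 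This is the claimed pairing.

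I expect no real obstacle. The one point to make explicit is that the client edges selected in $G_1$ and $G_2$ are the same set $F$, since each client line contains one copy of $xy$ in each graph. This is what allows the same forest $F$ to be used in both spanning-tree arguments. One should also note that vertex $r_i$ is not in $F$, so the contraction argument is valid in each graph.
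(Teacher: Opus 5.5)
Your proof is correct and follows essentially the same route as the paper. Connectivity of $F\cup E_{\mathrm{d}}$ and $F\cup E_{\mathrm{m}}$ gives at least one spoke endpoint per component, and acyclicity rules out two. Because $E_{\mathrm{m}}$ consists exactly of the spoke pairs $r_2v^-,r_2v^+$ of the selected marker lines, uniqueness then forces $v^-$ and $v^+$ into distinct components and yields the pairing; your contraction-to-a-star framing is only a repackaging of this argument.
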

\begin{proof}
Since $F\cup E_{\mathrm{d}}$ and $F\cup E_{\mathrm{m}}$ are spanning trees, each component of $F$ has at least one spoke from each set.
Two spokes from the same set in one component would form a cycle with the path between their endpoints.
For each selected marker line, $v^-$ and $v^+$ lie in distinct components of $F$.
The unique $E_{\mathrm{m}}$ spoke in each component therefore pairs it with exactly one other component.
\end{proof}

\begin{lemma}\label{lem:decoding}
Given a parity basis of total weight $W$ in the constructed instance of the WGMP, we can construct a TSP tour $T$ satisfying $c(T)\le W$.
\end{lemma}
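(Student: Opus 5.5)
The plan is to decompose $W$ into contributions of the components of $F$ and then build one closed walk through the depot for each pair of components matched by a marker line (Lemma~\ref{lem:pairing}). Each such walk will cost at most the weight charged to its two components. Concatenating these walks at the depot, projecting through $\theta$ and shortcutting then gives $T$.

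\emph{Step 1: rewrite $W$ componentwise.} By Lemma~\ref{lem:pairing}, each component $C$ of $F$ contains exactly one endpoint $d_C$ of a spoke in $E_{\mathrm{d}}$ and exactly one endpoint $m_C$ of a spoke in $E_{\mathrm{m}}$.
\begin{itemize}
\item Each selected client line contributes $2c_X(e)$ for its edge $e$ of $F$.
\item Each selected depot line contributes $2\rho_{\theta(x)}$ for each of its two spokes $r_1x\in E_{\mathrm{d}}$. Summing over depot lines gives $\sum_C 2\rho_{\theta(d_C)}$.
\item Each marker line for $v\in V_{\mathrm{m}}$ contributes $-2\rho_v=-\rho_{\theta(v^-)}-\rho_{\theta(v^+)}$, that is, $-\rho_{\theta(m_C)}$ for each of the two components containing $v^-$ and $v^+$.
\end{itemize}
Hence
\[
W=\sum_C\bigl(2c(F_C)+2\rho_{\theta(d_C)}-\rho_{\theta(m_C)}\bigr),
\]
where $F_C$ is the tree of $F$ on $C$.

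\emph{Step 2: one depot walk per marker pair.} Fix $v\in V_{\mathrm{m}}$, and let $A$ and $B$ be the distinct components containing $v^-=m_A$ and $v^+=m_B$. Let $P_A$ be the path in $F_A$ from $d_A$ to $v^-$, and let $P_B$ be the path in $F_B$ from $v^+$ to $d_B$. Form the closed walk
\[
r \to d_A \xrightarrow{P_A} v^- \to v^+ \xrightarrow{P_B} d_B \to r .
\]
The edge $v^-v^+$ has cost $0$. Add two copies of every edge of $F_A\setminus P_A$ and $F_B\setminus P_B$. The resulting multigraph is connected and every vertex has even degree, so it has a closed Euler walk through $r$ that visits all of $A\cup B$. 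Its cost is
\[
\rho_{\theta(d_A)}+\rho_{\theta(d_B)}+2c(F_A)+2c(F_B)-c(P_A)-c(P_B).
\]
The triangle inequality gives $\rho_v\le\rho_{\theta(d_A)}+c(P_A)$ and $\rho_v\le\rho_{\theta(d_B)}+c(P_B)$. Adding these two inequalities shows that the cost above is at most
\[
2c(F_A)+2c(F_B)+2\rho_{\theta(d_A)}+2\rho_{\theta(d_B)}-2\rho_v,
\]
which is exactly the contribution of $A$ and $B$ to $W$ from Step~1.

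\emph{Step 3: assemble.} By Lemma~\ref{lem:pairing}, the marker pairs partition the components of $F$, so the walks from Step~2 together cover all of $X$. Concatenating them at $r$ gives a closed walk of cost at most $W$. Replacing each $x\in X$ by $\theta(x)$ preserves edge costs, since $c_X$ and the spoke costs are defined through $\theta$. Since $\theta(X)=V$, the projected walk visits every client and starts and ends at $r$. Shortcutting repeated visits, including intermediate returns to $r$, yields a TSP tour $T$ with $c(T)\le W$ by the triangle inequality.

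The main obstacle is Step~2: seeing that the negative marker weight $-2\rho_v$ is paid for by traversing the two tree paths once instead of twice, together with the asymmetric charging in which the depot spokes carry coefficient $2$ but are used only once. The remaining steps are routine bookkeeping.
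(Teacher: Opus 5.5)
Your proof is correct and matches the paper's argument. Your closed walk for each marker pair is exactly the paper's cycle $Z_v$, kept once with the remaining forest edges doubled, and your per-pair triangle-inequality bound is the paper's $c(Z_v)\ge 2\rho_v$; the only difference is bookkeeping, since you split $W$ by components and concatenate the walks at $r$, while the paper bounds the whole multigraph at once via $2c(F\cup E_{\mathrm{d}})-\sum_{v\in V_{\mathrm{m}}}c(Z_v)\le W$.
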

\begin{proof}
By Lemma~\ref{lem:pairing}, the components of $F$ are paired.
For each $v\in V_{\mathrm{m}}$, we join its two components by the zero-cost edge $v^-v^+$ and connect them to $r$ using their two $E_{\mathrm{d}}$ spokes, as shown in Figure~\ref{fig:parity-decoding}.
This creates a cycle $Z_v$ through $r$ and both copies of $v$. These cycles share only the depot.
The two paths on $Z_v$ from $r$ to $v^-$ and $v^+$ that avoid the zero-cost edge each have cost at least $\rho_v$ by the triangle inequality.
Thus, $c(Z_v)\ge2\rho_v$.

We keep each cycle once and double every edge outside the cycles.
The resulting multigraph is connected, spans all copies, and has even degree at every vertex.
By the line weights, it holds that $W=2c(F)+2c(E_{\mathrm{d}})-2\sum_{v\in V_{\mathrm{m}}}\rho_v$.
Since the added edges have cost zero, the multigraph has cost
\begin{equation}\label{eq:decode}
 2c(F\cup E_{\mathrm{d}})-\sum_{v\in V_{\mathrm{m}}}c(Z_v)
 \le2c(F\cup E_{\mathrm{d}})-2\sum_{v\in V_{\mathrm{m}}}\rho_v=W.
\end{equation}

We take an Eulerian tour of this multigraph, map each copy to its client, and shortcut repeated clients and intermediate depot visits.
By the triangle inequality, the resulting TSP tour $T$ has cost at most $W$.
\end{proof}

\begin{figure}[t]
\centering
\caption{The depot cycle $Z_v$ formed by the two components paired by the marker line for $v$.}\label{fig:parity-decoding}
\vspace{-6pt}
\begin{subfigure}[t]{0.49\linewidth}
\centering
\caption{The two paired components}
\label{fig:parity-decoding-a}
\begin{tikzpicture}[x=1.12cm,y=0.78cm,
  every node/.style={font=\small},
  vertex/.style={circle,draw,fill=white,inner sep=0pt,minimum size=4.8pt},
  depot/.style={rectangle,draw,fill=black,inner sep=0pt,minimum size=5pt},
  marker/.style={circle,draw,fill=black,inner sep=0pt,minimum size=5pt},
  edge/.style={line width=0.6pt},
  kept/.style={line width=1pt},
  doubled/.style={double,double distance=1.3pt,line width=0.4pt}]
\node[depot,label=above:$r$] (r) at (2.1,2.2) {};
\node[vertex] (a) at (0.9,1.15) {};
\node[vertex] (b) at (3.3,1.15) {};
\node[marker,label=below:$v^-$] (vm) at (1.3,0) {};
\node[marker,label=below:$v^+$] (vp) at (2.9,0) {};
\node[vertex] (c) at (0,0.55) {};
\node[vertex] (d) at (4.2,0.55) {};
\draw[edge] (r)--node[above left,inner sep=1pt] {$E_{\mathrm{d}}$} (a);
\draw[edge] (r)--node[above right,inner sep=1pt] {$E_{\mathrm{d}}$} (b);
\draw[edge] (a)--(vm) (a)--(c) (b)--(vp) (b)--(d);
\end{tikzpicture}
\end{subfigure}
\begin{subfigure}[t]{0.49\linewidth}
\centering
\caption{Keep the cycle and double other edges}
\label{fig:parity-decoding-b}
\begin{tikzpicture}[x=1.12cm,y=0.78cm,
  every node/.style={font=\small},
  vertex/.style={circle,draw,fill=white,inner sep=0pt,minimum size=4.8pt},
  depot/.style={rectangle,draw,fill=black,inner sep=0pt,minimum size=5pt},
  marker/.style={circle,draw,fill=black,inner sep=0pt,minimum size=5pt},
  edge/.style={line width=0.6pt},
  kept/.style={line width=1pt},
  doubled/.style={double,double distance=1.3pt,line width=0.4pt}]
\node[depot,label=above:$r$] (r2) at (2.1,2.2) {};
\node[vertex] (a2) at (0.9,1.15) {};
\node[vertex] (b2) at (3.3,1.15) {};
\node[marker,label=below:$v^-$] (vm2) at (1.3,0) {};
\node[marker,label=below:$v^+$] (vp2) at (2.9,0) {};
\node[vertex] (c2) at (0,0.55) {};
\node[vertex] (d2) at (4.2,0.55) {};
\draw[kept] (r2)--(a2)--(vm2);
\draw[kept] (r2)--(b2)--(vp2);
\draw[kept,dashed] (vm2)--node[below,inner sep=2pt] {$0$} (vp2);
\draw[doubled] (a2)--(c2) (b2)--(d2);
\node at (2.1,1.02) {$Z_v$};
\end{tikzpicture}
\end{subfigure}
\end{figure}

\subsection{Analysis}\label{sec:guarantees}
For each nonempty $B\subseteq V$, let $\rho_{\max}(B)=\max_{v\in B}\rho_v$, and define
\begin{equation}\label{eq:partition}
 \pi_{\mathrm{part}}=\min_{\P\text{ partition of }V}\sum_{B\in\P}\lra{2\tau(B)-2\rho_{\max}(B)}.
\end{equation}
This value depends only on the metric and imposes no capacity constraints on the depot tours.
We first show that $\pi_{\mathrm{part}}\le2\Cs(\bm{q})-\ell(\bm{q})$ for every demand vector $\bm{q}$ using a partition obtained from a splittable routing.
We then show that the minimum parity basis weight of our WGMP instance is at most $\pi_{\mathrm{part}}$.

\paragraph{From splittable routings to a client partition.}
We use a matching argument to partition the clients
while preserving the routing cost and the radial bound.

\begin{lemma}\label{lem:allocation}
For any $\bm{q}\in\lrc{0,1,\ldots,Q}^V$ and every splittable routing of cost $\pi$ that serves $\bm{q}$ and visits every client in $V$, there exist a partition $\P$ of $V$ and, for each $B\in\P$, a depot tour $T_B$ visiting exactly the clients of $B$ once, such that $\sum_{B\in\P}c(T_B)\le \pi$ and $2\sum_{B\in\P}\rho_{\max}(B)\ge \ell(\bm{q})$.
\end{lemma}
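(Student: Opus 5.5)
The plan is to decompose the splittable routing into its depot tours, assign every client to exactly one tour that visits it, and shortcut each tour to its assigned clients. The only nontrivial requirement is the radial inequality. I will obtain it by first choosing, via bipartite matching, a set of distinct "representative" clients for the tours, so that the sum of their depot distances dominates $\ell(\bm{q})/2$.

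\emph{Step 1 (setup).} Write the routing as depot tours $R_1,\ldots,R_k$, with $\sum_j c(R_j)=\pi$. Let $q_{v,j}\ge0$ be the amount delivered to $v$ on $R_j$, where $q_{v,j}>0$ only if $R_j$ visits $v$. Feasibility gives $\sum_v q_{v,j}\le Q$ for each $j$ and $\sum_j q_{v,j}=q_v\le Q$ for each $v$. Set $x_{j,v}=q_{v,j}/Q$. Then $x$ is a fractional matching in the bipartite graph $H$ between tours and clients, where $j$ is adjacent to $v$ whenever $R_j$ visits $v$: every row sum and every column sum is at most $1$. Moreover,
\[
\ell(\bm{q})=\frac{2}{Q}\sum_v\rho_v\sum_j q_{v,j}=2\sum_{j,v}\rho_v x_{j,v}.
\]

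\emph{Step 2 (integral matching).} The bipartite matching polytope is integral. Hence $H$ has an integral matching $M$ with $\sum_{(j,v)\in M}\rho_v\ge\sum_{j,v}\rho_v x_{j,v}=\ell(\bm{q})/2$. I expect this to be the key step and the one needing the most care. The points to check are that integrality of the bipartite matching polytope applies here (the weights $\rho_v$ are nonnegative and the constraints are the degree constraints), and that the fractional solution is supported only on edges where the tour actually visits the client.

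\emph{Step 3 (partition and shortcutting).} Assign each matched client $v$ to its partner tour $j$. These assignments involve distinct clients, and each tour receives at most one of them. Assign every unmatched client to an arbitrary tour that visits it; such a tour exists because the routing visits every client in $V$. Let $B_j$ be the set of clients assigned to $R_j$, and discard the tours with $B_j=\emptyset$. The nonempty sets $B_j$ form a partition $\P$ of $V$. For each part, obtain $T_{B_j}$ from $R_j$ by shortcutting every vertex not in $B_j$ and every repeated visit. This yields a depot tour visiting exactly the clients of $B_j$ once, and by the triangle inequality $c(T_{B_j})\le c(R_j)$. Therefore $\sum_{B\in\P}c(T_B)\le\pi$. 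Finally, each tour $j$ matched in $M$ to a client $v$ has $v\in B_j$, so $\rho_{\max}(B_j)\ge\rho_v$. Since the matched tours are distinct, summing over $M$ gives $2\sum_{B\in\P}\rho_{\max}(B)\ge2\sum_{(j,v)\in M}\rho_v\ge\ell(\bm{q})$, which completes the plan.
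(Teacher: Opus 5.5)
Your proposal is correct and follows essentially the same argument as the paper. Both define the fractional matching $x_{j,v}=q_{v,j}/Q$ between depot tours and the clients they visit, round it via bipartite integrality to an integral matching of weight at least $\ell(\bm{q})/2$, assign matched clients to their partner tours and the remaining clients to any tour visiting them, and shortcut.
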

\begin{proof}
Index the depot tours of the routing by $j$, and let $x_{j,v}$ be the amount of demand served at $v$ on tour $j$.
Construct a bipartite graph with the tours on one side and the clients on the other, and include an edge $jv$ of weight $\rho_v$ whenever tour $j$ visits $v$.
Since $\sum_vx_{j,v}\le Q$ and $\sum_jx_{j,v}=q_v\le Q$, the values $x_{j,v}/Q$ define a fractional matching of weight $\sum_{j,v}\rho_vx_{j,v}/Q=\ell(\bm{q})/2$.
By integrality of bipartite matching, we can choose a matching of weight at least $\ell(\bm{q})/2$.
Note that this matching may leave clients and depot tours unmatched.

Then, assign each matched client to its matched tour and every other client to any tour visiting it, including clients with $q_v=0$.
Shortcut each tour to its assigned clients and remove tours with no assigned clients.
The assigned client sets form a partition $\P$ of $V$, and the resulting tours satisfy $\sum_{B\in\P}c(T_B)\le \pi$.
Since each resulting tour contains at most one matched client, we obtain $2\sum_{B\in\P}\rho_{\max}(B)\ge2\sum_{v\text{ matched}}\rho_v\ge\ell(\bm{q})$.
\end{proof}

For the partition and tours in Lemma~\ref{lem:allocation}, it holds that $\tau(B)\le c(T_B)$ for each $B\in\P$.
Hence, by \eqref{eq:partition},
\begin{equation}\label{eq:partition-transfer}
 \pi_{\mathrm{part}}\le\sum_{B\in\P}\lra{2c(T_B)-2\rho_{\max}(B)}\le2\pi-\ell(\bm{q}).
\end{equation}

\paragraph{Constructing a parity basis from depot tours.}
We next construct a parity basis using depot tours on a client partition and bound the minimum parity basis weight.
\begin{lemma}\label{lem:embedding}
For every partition $\P$ of $V$ and depot tours $T_B$ visiting exactly the clients of $B$ once for each $B\in\P$, the constructed instance of the WGMP has a parity basis of weight $\sum_{B\in\P}\lra{2c(T_B)-2\rho_{\max}(B)}$.
\end{lemma}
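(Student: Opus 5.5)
The plan is to build the parity basis block by block, using one depot line, one marker line and a set of client lines for each $B\in\P$, and then check the spanning-tree property in $G_1$ and $G_2$ separately. Fix $B\in\P$, write $T_B=(r,u_1,\ldots,u_k,r)$, and let $w\in B$ attain $\rho_{\max}(B)$, so $w=u_t$ for some $t$. The idea is to cut $T_B$ at $w$ and use the two copies of $w$ as the two cut ends, so that the marker line of $w$ can carry the credit $-2\rho_w$.

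For the block $B$, I will select the following lines.
\begin{enumerate}
\item Client lines for the edges of the path $u_1^-,u_2^-,\ldots,u_t^-$, where $u_t^-=w^-$.
\item Client lines for the edges of the path $w^+,u_{t+1}^-,\ldots,u_k^-$.
\item The zero-weight client lines $u^-u^+$ for every $u\in B\setminus\lrc{w}$.
\item The depot line pairing the spokes $r_1a$ and $r_1b$, where $a$ is the first copy $u_1^-$ (or $w^-$ if $t=1$) and $b$ is the last copy $u_k^-$ (or $w^+$ if $t=k$).
\item The marker line of $w$.
\end{enumerate}
The copies $a$ and $b$ are always distinct, even when $k=1$, since then $a=w^-$ and $b=w^+$. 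Hence this depot line exists in the instance. Each copy of a client is an endpoint of at most one selected spoke in $G_1$, and each spoke $r_1x$ has $2n-1$ parallel occurrences, one for each partner $y$. Therefore distinct selected depot lines use distinct edge occurrences.

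For the weight, the two client paths contain every tour edge of $T_B$ between consecutive clients. The only exception is the cut at $w$, which contributes nothing. Each such line has weight $2c_X=2c$, and the lines $u^-u^+$ cost $0$. The depot line contributes $2\rho_{u_1}+2\rho_{u_k}$, which is twice the cost of the two depot edges of $T_B$. The marker line contributes $-2\rho_w$. Summing gives exactly $2c(T_B)-2\rho_{\max}(B)$. I will also check the degenerate cases $t=1$, $t=k$ and $k=1$; for example, when $k=1$ the contribution is $4\rho_w-2\rho_w$, which matches.

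For the parity-basis property, the selected client edges restricted to the copies of $B$ form two trees. One tree $F_B^-$ contains $w^-$ together with the copies $u^\pm$ for $u_1,\ldots,u_{t-1}$. The other tree $F_B^+$ contains $w^+$ together with the copies $u^\pm$ for $u_{t+1},\ldots,u_k$. Each is a path with pendant leaves, and together they cover all $2|B|$ copies. In $G_1$, each of $F_B^-$ and $F_B^+$ receives exactly one selected spoke to $r_1$: the one at $a$ and the one at $b$, respectively. In $G_2$, the marker line attaches $F_B^-$ to $r_2$ via $w^-$ and attaches $F_B^+$ via $w^+$.

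Taking the union over all blocks, every component of $F$ is a tree joined to the depot vertex by exactly one spoke in each graph. Both $F\cup E_{\mathrm d}$ and $F\cup E_{\mathrm m}$ are therefore spanning trees of $G_1$ and $G_2$. This gives a parity basis of the claimed total weight. The main care needed is the bookkeeping at $w$ and the degenerate cases. In particular, I must ensure that the cut at $w$ leaves no client copy orphaned and that the marker spokes attach to distinct components, which holds because $w^-$ and $w^+$ lie in different trees by construction.
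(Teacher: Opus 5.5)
Your proposal is correct and follows essentially the paper's construction: you cut each $T_B$ at a client of maximum $\rho$, use its two copies as the cut ends, and select the depot line for the two end spokes, the marker line for that client, and client lines for the tour edges and the zero-cost copy edges, so the weight is $2c(T_B)-2\rho_{\max}(B)$ and each forest component gets exactly one spoke in each graph. The only difference is cosmetic: you hang each $u^+$ as a pendant leaf off $u^-$, whereas the paper threads $u^-,u^+$ consecutively along the path, and this changes neither the weight nor the spanning-tree argument.
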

\begin{proof}
For each $B\in\P$, choose $p_B\in B$ with $\rho_{p_B}=\rho_{\max}(B)$ and split $T_B$ into two depot-to-$p_B$ paths.
Use $p_B^-$ and $p_B^+$ as their respective endpoints, and replace every other client $v$ by consecutive copies $v^-,v^+$ joined by a zero-cost edge.
Let $E_{\mathrm{d}}$ be their depot spokes in $G_1$, and let $F$ be the forest on $X$ formed by their remaining edges.
The paths cover every copy once and share only $r_1$, so $F\cup E_{\mathrm{d}}$ is a spanning tree of cost $\sum_{B\in\P}c(T_B)$.

Each component of $F$ is a path with an endpoint $p_B^-$ or $p_B^+$ for some $B\in\P$.
We connect these endpoints to $r_2$ using the marker lines for $p_B$, and let $E_{\mathrm{m}}$ be their spokes in $G_2$.
Then, $F\cup E_{\mathrm{m}}$ is also a spanning tree.
The client lines corresponding to $F$, the depot line for each block, and the marker line for each $p_B$ therefore form a parity basis.
Its total weight is $2c(F)+2c(E_{\mathrm{d}})-2\sum_{B\in\P}\rho_{p_B}=\sum_{B\in\P}\lra{2c(T_B)-2\rho_{\max}(B)}$.
For a singleton block, the two paths consist of the distinct spokes to $p_B^-$ and $p_B^+$, so the same construction applies.
\end{proof}

\paragraph{The tour cost and approximation guarantees.}
We first analyze the performance of $T$.

\begin{lemma}\label{lem:geometry}
The TSP algorithm constructs in $\O(n^7)$ time a TSP tour $T$ satisfying $c(T)\le \pi_{\mathrm{part}}\le2\Cs(\bm{q})-\ell(\bm{q})$ for every $\bm{q}\in\lrc{0,1,\ldots,Q}^V$.
\end{lemma}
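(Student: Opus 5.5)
The plan is to chain together the three structural lemmas already established, and then account separately for the running time.

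First, for the cost bound, let $W^*$ denote the minimum weight of a parity basis in the constructed WGMP instance. Let $\P^*$ be a partition attaining the minimum in \eqref{eq:partition}. For each $B\in\P^*$, take an optimal tour for $\tau(B)$. By the triangle inequality it can be shortcut to a depot tour $T_B$ that visits exactly the clients of $B$ once, with $c(T_B)=\tau(B)$. Lemma~\ref{lem:embedding} then yields a parity basis of weight $\sum_{B\in\P^*}(2\tau(B)-2\rho_{\max}(B))=\pi_{\mathrm{part}}$, so $W^*\le\pi_{\mathrm{part}}$. The algorithm computes a minimum-weight parity basis, and Lemma~\ref{lem:decoding} converts it into a TSP tour $T$ with $c(T)\le W^*\le\pi_{\mathrm{part}}$.

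For the second inequality, fix any $\bm{q}$ and apply \eqref{eq:partition-transfer} to an optimal splittable routing of cost $\pi=\Cs(\bm{q})$. Such a routing visits all clients by definition, so Lemma~\ref{lem:allocation} applies and gives $\pi_{\mathrm{part}}\le2\Cs(\bm{q})-\ell(\bm{q})$. The tour $T$ is computed from the metric alone, so the same $T$ works simultaneously for every $\bm{q}$.

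The main obstacle is the running time. I would bound it through the complexity of the weighted matroid parity algorithm of \citet{journal:iwata:SICOMP:22} for graphic (linear) matroids. That algorithm runs in time polynomial in the number of vertices and lines; in the form I would cite it, this is about $\O(m n'^3)$ or $\O(n'^3 m)$ for $n'$ vertices and $m$ lines. Our instance has $n'=4n+2=\O(n)$ vertices and $m=4n^2-n=\O(n^2)$ lines. Plugging these in, and checking the exact exponent of the cited bound, should give $\O(n^7)$. Concretely, I would try the bound $\O(n'^3 m^2)$ or the bound for the dimension-$\O(n)$ representation, and check which yields $n^3\cdot n^4$.

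The remaining steps add lower-order work:
\begin{itemize}
\item building the instance takes $\O(n^2)$ time;
\item shifting the weights to be nonnegative takes $\O(n^2)$ time;
\item decoding via Lemma~\ref{lem:pairing} and Lemma~\ref{lem:decoding} (finding components, forming the $Z_v$, doubling edges, Euler tour, shortcutting) takes $\O(n^2)$ time.
\end{itemize}
Hence the total is dominated by the matroid parity computation.
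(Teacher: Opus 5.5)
Your two cost inequalities are correct and follow the paper's own chain. Lemma~\ref{lem:embedding} applied to a minimizing partition gives a basis of weight $\pi_{\mathrm{part}}$; your explicit shortcutting of optimal $\tau(B)$-tours to tours $T_B$ with $c(T_B)=\tau(B)$ is exactly the step the paper leaves implicit. Lemma~\ref{lem:decoding} then gives $c(T)\le\pi_{\mathrm{part}}$, and Lemma~\ref{lem:allocation} with \eqref{eq:partition-transfer} gives the second inequality for every $\bm{q}$.

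The gap is the running time, which is part of the statement and which your proposal does not establish. You leave the complexity of the matroid parity algorithm undetermined and propose trying candidate bounds until one produces $n^3\cdot n^4$. That reverse-engineers the exponent rather than deriving it. Your first guess, cubic in the number of vertices and linear in the number of lines, also has the two parameters swapped. The bound $\O(n'^3m^2)$ is not the cited result either.

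The missing ingredient is the actual form of the result the paper invokes. The deterministic algorithm of \citet[Theorem~11.1]{journal:iwata:SICOMP:22} performs a number of field operations cubic in the size of the ground set and linear in the rank. Here the ground set is the edge multiset of $G_1\cup G_2$, of size $8n^2-2n=\O(n^2)$, and the rank is the basis size $4n$. So the count is $\O((n^2)^3\cdot n)=\O(n^7)$.

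You also need each field operation to cost $\O(1)$. This holds because a graphic matroid is represented over the two-element field, using incidence vectors with arithmetic modulo $2$. Without this, the operation count does not translate into a time bound. The remaining steps, building the instance in $\O(n^2)$ time and decoding the $\O(n)$ selected edges, are lower order, as you say.
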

\begin{proof}
By Lemma~\ref{lem:embedding}, the minimum parity basis weight is at most $\pi_{\mathrm{part}}$.
Then, by Lemma~\ref{lem:decoding}, it holds that $c(T)\le \pi_{\mathrm{part}}$.
For any splittable routing of cost $\pi$ that serves $\bm{q}$ and visits every client in $V$, \eqref{eq:partition-transfer} implies $\pi_{\mathrm{part}}\le2\pi-\ell(\bm{q})$.
In particular, we obtain $\pi_{\mathrm{part}}\le2\Cs(\bm{q})-\ell(\bm{q})$.

We now analyze the running time of the TSP algorithm.
The constructed WGMP instance has $\O(n)$ vertices, $\O(n^2)$ edges, and $\O(n^2)$ lines.
Constructing its graphs and lines takes $\O(n^2)$ time.
For this graphic matroid, since it can be represented over the field $\lrc{0,1}$ with arithmetic modulo $2$ and each field operation takes $\O(1)$ time, the running time of the deterministic algorithm of \citet[Theorem~11.1]{journal:iwata:SICOMP:22} is cubic in the number of edges and linear in the size of a basis.
Since every basis has $\O(n)$ edges, computing a minimum-weight parity basis takes $\O((n^2)^3n)=\O(n^7)$ time.
Constructing the TSP tour from the $\O(n)$ selected edges takes $\O(n)$ time.
Thus, the total running time is $\O(n^7)$.
\end{proof}

We then apply the a priori policies to obtain the following approximation guarantees.

\begin{theorem}\label{thm:random}
The VRPSD admits randomized a priori $2$-approximation and $3$-approximation algorithms for splittable and unsplittable deliveries, respectively.
\end{theorem}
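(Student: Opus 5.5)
The plan is to combine the tour guarantee of Lemma~\ref{lem:geometry} with the per-demand-vector cost bounds of Lemma~\ref{lem:gnr-cost}, and then take expectations using Lemma~\ref{lem:benchmarks}.

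\textbf{The algorithm.} First compute the TSP tour $T$ from Lemma~\ref{lem:geometry}. This uses only the metric, so $T$ is fixed before any demand is observed, and it takes $\O(n^7)$ time. Then run the a priori policy on $T$ with $U$ drawn uniformly from $\lrc{0,\ldots,Q-1}$. Under either shortage rule, every realized demand is served and capacity is respected, so the policy is feasible. It is a priori in the sense of Definition~\ref{def:class}: the order is fixed, and each client's demand is completed before the next client's first visit. Both $U$ and $\bm{D}$ are independent of $T$.

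\textbf{Splittable case.} Fix a demand vector $\bm{q}$. By Lemma~\ref{lem:gnr-cost} and Lemma~\ref{lem:geometry},
\[
\mathbb E_U[\cost_{\spv}(T,\bm{q})]=c(T)+\ell(\bm{q})\le 2\Cs(\bm{q})-\ell(\bm{q})+\ell(\bm{q})=2\Cs(\bm{q}).
\]
The key point is that the $-\ell(\bm{q})$ term in the tour bound cancels the $+\ell(\bm{q})$ term in the policy cost exactly. Since $U$ is independent of $\bm{D}$, we can substitute $\bm{q}=\bm{D}$ and take expectations over $\bm{D}$. Lemma~\ref{lem:benchmarks} then gives an expected cost of at most $2\,\EE{\Cs(\bm{D})}\le 2\Ops$.

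\textbf{Unsplittable case.} For fixed $\bm{q}$, Lemma~\ref{lem:gnr-cost} gives
\[
\mathbb E_U[\cost_{\unv}(T,\bm{q})]\le c(T)+2\ell(\bm{q})\le 2\Cs(\bm{q})+\ell(\bm{q}).
\]
By Lemma~\ref{lem:benchmarks}, $\Cs(\bm{q})\le\Cu(\bm{q})$ and $\ell(\bm{q})\le\Cu(\bm{q})$, so the right-hand side is at most $3\Cu(\bm{q})$. Taking expectations over $\bm{D}$ and applying the last part of Lemma~\ref{lem:benchmarks} bounds the expected cost by $3\Opu$.

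\textbf{Main obstacle.} The step that needs care is the interchange of expectations. The bound must hold pointwise for every demand vector $\bm{q}$, with the same tour $T$ used for all of them. This is exactly why Lemma~\ref{lem:geometry} is stated for every $\bm{q}$ while $T$ depends only on the metric. Once that is in hand, independence of $U$ from $\bm{D}$ lets us condition on $\bm{D}$ and then average. For the running time, note that the policy only needs to sample $U$ and apply the shortage rules online, so the total time is polynomial, dominated by the $\O(n^7)$ tour construction.
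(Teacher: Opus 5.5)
Your proposal is correct and matches the paper's proof: you derive the pointwise bounds $\mathbb E_U[\cost_{\spv}(T,\bm q)]\le 2\Cs(\bm q)$ and $\mathbb E_U[\cost_{\unv}(T,\bm q)]\le 2\Cs(\bm q)+\ell(\bm q)\le 3\Cu(\bm q)$ from Lemmas~\ref{lem:gnr-cost} and~\ref{lem:geometry}, then average over $\bm D$ using Lemma~\ref{lem:benchmarks}. The only detail you glossed is that sampling $U$ takes expected $\O(\log Q)$ time, so the total running time is $\O(n^7+\log Q)$ rather than being dominated purely by the $\O(n^7)$ tour construction.
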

\begin{proof}
By Lemmas~\ref{lem:benchmarks},~\ref{lem:gnr-cost}, and~\ref{lem:geometry}, for every demand vector $\bm{q}$, it holds that $\mathbb E_U[\cost_{\spv}(T,\bm{q})]\le2\Cs(\bm{q})$ and $\mathbb E_U[\cost_{\unv}(T,\bm{q})]\le2\Cs(\bm{q})+\ell(\bm{q})\le3\Cs(\bm{q})\le3\Cu(\bm{q})$.
Taking expectations over the demands and applying Lemma~\ref{lem:benchmarks}, we obtain the claimed approximation ratios.

For the running time, both algorithms construct $T$ in $\O(n^7)$ time and sample $U$ in expected $\O(\log Q)$ time.
Each client requires $\O(1)$ stock operations and at most two replenishments, so serving all clients takes $\O(n)$ time.
Thus, the total expected running time is $\O(n^7+\log Q)$.
\end{proof}

As noted by \citet{journal:gupta:OR:12}, these randomized a priori algorithms require no knowledge of the demand distributions under our assumption that $\Pr(D_v>0)>0$ for every $v\in V$.

\section{Deterministic A Priori Algorithms}\label{sec:known}
In this section, we obtain deterministic a priori $(2+\varepsilon)$-approximation and $(3+\varepsilon)$-approximation algorithms for splittable and unsplittable deliveries, respectively.
We retain the replenishment rules from Section~\ref{sec:random} but choose the initial stock deterministically.
We first evaluate the expected cost for every initial stock by dynamic programming, obtaining approximation ratios of $2$ and $3$ in pseudopolynomial time.
We then compress the distributions of cumulative demand modulo $Q$ to obtain polynomial running time.

Fix a TSP tour $T=(r=v_0,v_1,\ldots,v_n,v_{n+1}=r)$.
With initial stock $u\in\lrc{0,\ldots,Q-1}$, both policies have the same stock upon arrival at each client, so they have the same shortages.
Each shortage at $v_i$ incurs an extra cost of $2\rho_{v_i}$ for splittable deliveries and at most $4\rho_{v_i}$ for unsplittable deliveries.

\subsection{Choosing the Initial Stock}
For $s\in\lrc{0,\ldots,Q-1}$, let $R_i(s)$ be the expected extra cost of serving the remaining clients after serving $v_i$ with remaining stock $s$, under the splittable a priori policy.
Here, $R_0(s)$ is the expected extra cost before visiting the first client, and $R_n(s)=0$.
Since the stock after serving demand $q$ is $(s-q)\bmod Q$, for $i=n,\ldots,1$,
\begin{equation}\label{eq:initial-dp}
 R_{i-1}(s)=\EE{2\rho_{v_i}\ind_{\lrc{D_{v_i}>s}}+R_i((s-D_{v_i})\bmod Q)}.
\end{equation}
The recurrence follows from the prescribed shortage-response rule and the independence of the demands.
With initial stock $u$, the expected travel cost is $c(T)+R_0(u)$ for splittable deliveries and at most $c(T)+2R_0(u)$ for unsplittable deliveries.
Thus, we can compute the smallest initial stock $u^*$ minimizing $R_0(u)$.

\begin{theorem}\label{thm:exact-priori}
The VRPSD admits deterministic a priori algorithms with approximation ratios $2$ and $3$ for splittable and unsplittable deliveries, respectively, with running time $\O(n^7+QS)$.
\end{theorem}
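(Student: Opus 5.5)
The plan is to combine the tour from Lemma~\ref{lem:geometry} with the deterministic choice $u^*$ of initial stock. The key observation is that the minimum over $u$ is at most the average over $u$, and that average is exactly what Lemma~\ref{lem:gnr-cost} controls.

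First, I would justify the dynamic program. By the recurrence \eqref{eq:initial-dp}, which uses the independence of the demands, an induction from $i=n$ down to $i=0$ shows that $R_0(u)$ equals the expected extra splittable cost with initial stock $u$. Hence the expected splittable cost is exactly $c(T)+R_0(u)$. The unsplittable shortages coincide with the splittable ones, and each costs at most $4\rho_{v_i}=2\cdot 2\rho_{v_i}$, so the expected unsplittable cost is at most $c(T)+2R_0(u)$.

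Second, I would bound $R_0(u^*)$. Since $U$ in Section~\ref{sec:random} is uniform on $\lrc{0,\ldots,Q-1}$ and independent of $\bm{D}$,
\[
R_0(u^*)\le \frac1Q\sum_{u=0}^{Q-1}R_0(u)=\mathbb E_{\bm{D}}\bigl[\mathbb E_U[\cost_{\spv}(T,\bm{D})]\bigr]-c(T)=\EE{\ell(\bm{D})}.
\]
The last equality is the identity in Lemma~\ref{lem:gnr-cost}, applied to each fixed $\bm{q}$ and then averaged over $\bm{D}$.

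With this, the splittable cost is at most $c(T)+\EE{\ell(\bm{D})}\le \EE{2\Cs(\bm{D})}\le 2\Ops$, using Lemma~\ref{lem:geometry} pointwise in $\bm{q}$ and then Lemma~\ref{lem:benchmarks}. The unsplittable cost is at most $c(T)+2\EE{\ell(\bm{D})}\le \EE{2\Cs(\bm{D})+\ell(\bm{D})}\le 3\EE{\Cu(\bm{D})}\le 3\Opu$, using $\ell\le\Cs\le\Cu$ from Lemma~\ref{lem:benchmarks}.

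The running time is where I expect the only real care to be needed. Building $T$ costs $\O(n^7)$. For each $i$ and each of the $Q$ stock values $s$, evaluating \eqref{eq:initial-dp} takes time proportional to the support size $s_{v_i}$. Each term $(s-d)\bmod Q$ and each indicator is an $\O(1)$ lookup once $R_i$ is stored as an array. This gives $\O(Q\sum_i s_{v_i})=\O(QS)$, and choosing the smallest minimizer $u^*$ adds $\O(Q)$. The total is $\O(n^7+QS)$. The algorithm is deterministic, since $u^*$ depends only on the input distributions.
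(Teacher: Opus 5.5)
Your proposal is correct and follows the paper's proof essentially step for step. You bound $R_0(u^*)$ by the average $\frac1Q\sum_u R_0(u)=\EE{\ell(\bm{D})}$ via Lemma~\ref{lem:gnr-cost}, then combine it with $c(T)\le 2\Cs(\bm{q})-\ell(\bm{q})$ and $\ell(\bm{q})\le\Cs(\bm{q})\le\Cu(\bm{q})$, and you charge $\O(Qs_{v_i})$ work per client in the dynamic program. Your added justifications (that the two policies have identical shortage patterns, and that the expectations over $U$ and $\bm{D}$ may be exchanged) only make explicit what the paper states in the text preceding the theorem.
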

\begin{proof}
Use the TSP tour $T$ from Lemma~\ref{lem:geometry} and initial stock $u^*$ in both policies.
By Lemma~\ref{lem:gnr-cost}, it holds that $R_0(u^*)\le (1/Q)\sum_{u=0}^{Q-1}R_0(u)=\EE{\ell(\bm{D})}$.
Then, by Lemma~\ref{lem:benchmarks}, we obtain $c(T)+R_0(u^*)\le2\EE{\Cs(\bm{D})}\le2\Ops$ and $c(T)+2R_0(u^*)\le2\EE{\Cs(\bm{D})}+\EE{\ell(\bm{D})}\le3\EE{\Cu(\bm{D})}\le3\Opu$.
For the running time, at client $v_i$, evaluating \eqref{eq:initial-dp} takes $\O(Qs_{v_i})$ time.
Summing over all clients and selecting a minimizing initial stock takes $\O(QS)$ time.
Thus, including the tour construction in Lemma~\ref{lem:geometry} yields the total running time $\O(n^7+QS)$.
\end{proof}

To avoid the pseudopolynomial dependence on $Q$, we next approximate $R_0$ by compressing the distributions of cumulative demands.

\subsection{Compressing the Distributions}
For $0\le i\le n$, let $X_i=\sum_{j=1}^iD_{v_j}$ be the total demand of the first $i$ clients, and let $Y_i=X_i\bmod Q$.
The number of shortages among these clients can be expressed using $X_i$ and $Y_i$.
For an initial stock $u\in\lrc{0,\ldots,Q-1}$, define $h_u(z)=\ind_{\lrc{z>u}}-z/Q$ for $z\in\lrc{0,\ldots,Q-1}$.
For $1\le i\le n$, let $b_i=\rho_{v_i}-\rho_{v_{i+1}}$, where $\rho_{v_{n+1}}=0$.

\begin{lemma}\label{lem:initial-cost}
For every initial stock $u$, it holds that $R_0(u)=\EE{\ell(\bm{D})}+2\sum_{i=1}^n b_i\EE{h_u(Y_i)}$.
\end{lemma}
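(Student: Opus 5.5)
The plan is to compute the realized extra cost pathwise for a fixed demand vector and then take expectations. Fix an initial stock $u$ and a realization of $\bm{D}$. For $0\le i\le n$, let $N_i$ be the number of shortages among $v_1,\ldots,v_i$ under the splittable policy, so $N_0=0$. Since $D_{v_i}\le Q$ and the stock is nonnegative, a single replenishment always covers the remaining demand $D_{v_i}-s\le Q$. Hence each client causes at most one shortage, and the shortage indicator at $v_i$ is $N_i-N_{i-1}$. The realized extra cost is therefore $2\sum_{i=1}^n\rho_{v_i}(N_i-N_{i-1})$, and by definition $R_0(u)$ is its expectation, since the recurrence \eqref{eq:initial-dp} simply unrolls this sum.

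The key step is a closed form for $N_i$. Each shortage adds exactly $Q$ units of stock, and the stock after serving $v_i$ is $(u-X_i)\bmod Q$, as noted before \eqref{eq:initial-dp}. This gives the accounting identity
\[
u+QN_i-X_i=(u-X_i)\bmod Q.
\]
Write $X_i=Qk_i+Y_i$ with $Y_i\in\lrc{0,\ldots,Q-1}$. Since $0\le u\le Q-1$, we have $(u-X_i)\bmod Q=u-Y_i$ if $Y_i\le u$ and $u-Y_i+Q$ otherwise. Substituting gives $N_i=k_i+\ind_{\lrc{Y_i>u}}$, that is,
\[
N_i=\frac{X_i}{Q}+h_u(Y_i).
\]
This also holds for $i=0$, since $X_0=Y_0=0$ and $h_u(0)=0$.

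Next I would apply summation by parts using $N_0=0$ and the convention $\rho_{v_{n+1}}=0$:
\[
\sum_{i=1}^n\rho_{v_i}(N_i-N_{i-1})=\sum_{i=1}^n(\rho_{v_i}-\rho_{v_{i+1}})N_i=\sum_{i=1}^n b_iN_i.
\]
Substituting the closed form splits the extra cost into two parts. The first is $2\sum_i b_iX_i/Q$. Reversing the summation by parts, with $X_i-X_{i-1}=D_{v_i}$ and $X_0=0$, turns it into $(2/Q)\sum_i\rho_{v_i}D_{v_i}=\ell(\bm{D})$. The second part is $2\sum_i b_ih_u(Y_i)$. Taking expectations and using linearity then yields $R_0(u)=\EE{\ell(\bm{D})}+2\sum_i b_i\EE{h_u(Y_i)}$.

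The only delicate step is the identity for $N_i$. It needs at most one shortage per client, which follows from $D_v\le Q$, and the stock must be tracked modulo $Q$ with the value exactly $Q$ excluded. I would check both carefully, including the edge case $D_{v_i}=Q$ with stock $s$. In that case the policy serves $s$ units, refills to $Q$, and leaves stock $s$. This is consistent with one shortage and with $(s-Q)\bmod Q=s$.
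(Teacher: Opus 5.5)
Your proof is correct and follows essentially the same route as the paper: the accounting identity $u+QN_i-X_i\in\{0,\ldots,Q-1\}$ gives $N_i=X_i/Q+h_u(Y_i)$, and summation by parts turns the shortage cost into $2\sum_i b_iN_i$, whose linear part collapses to $\ell(\bm{D})$. The only difference is cosmetic: you collapse the $X_i/Q$ terms pathwise by reversing the summation by parts, whereas the paper exchanges the order of summation after taking expectations.
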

\begin{proof}
Let $N_i(u)$ be the number of shortages among the first $i$ clients, with $N_0(u)=0$.
After serving the first $i$ clients, the remaining stock is $u+QN_i(u)-X_i\in\lrc{0,\ldots,Q-1}$.
Consequently, $N_i(u)=\floor{X_i/Q}+\ind_{\lrc{Y_i>u}}=X_i/Q+h_u(Y_i)$.
Since $N_i(u)-N_{i-1}(u)$ indicates whether a shortage occurs at $v_i$, we obtain
\[
R_0(u)
=2\sum_{i=1}^n\rho_{v_i}\EE{N_i(u)-N_{i-1}(u)}
=2\sum_{i=1}^n b_i\EE{N_i(u)}.
\]
The terms involving $X_i/Q$ satisfy
\[
\frac2Q\sum_{i=1}^n b_i\EE{X_i}
=\frac2Q\sum_{j=1}^n\EE{D_{v_j}}
  \sum_{i=j}^n b_i
=\frac2Q\sum_{j=1}^n\rho_{v_j}\EE{D_{v_j}}
=\EE{\ell(\bm D)},
\]
where $\sum_{i=j}^n b_i=\rho_{v_j}$.
The remaining terms are $2\sum_i b_i\EE{h_u(Y_i)}$, which proves the claimed equality.
\end{proof}

Let $\mu_i(z)=\Pr(Y_i=z)$, with $\mu_0(0)=1$.
By Lemma~\ref{lem:initial-cost}, we only need to approximate the expected values of $h_u$ under these distributions.
However, each $\mu_i$ may have up to $Q$ support values.
We therefore compress the distributions while keeping these expected values nearly unchanged.

\begin{lemma}\label{lem:distribution-compression}
For any distribution $\mu$ on $\lrc{0,\ldots,Q-1}$ and integer $K\ge1$, a distribution $\hat\mu$ with at most $K$ support values can be computed such that $|\sum_z(\mu(z)-\hat\mu(z))h_u(z)|\le1/K$ for every $u$.
\end{lemma}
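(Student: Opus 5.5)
We decompose $h_u$ into a step part and a linear part: $\sum_z(\mu(z)-\hat\mu(z))h_u(z)=\bigl(\Pr_\mu(Z>u)-\Pr_{\hat\mu}(Z>u)\bigr)-\frac1Q\bigl(\EE{Z}_\mu-\EE{Z}_{\hat\mu}\bigr)$. The plan is to build $\hat\mu$ by rounding the cumulative distribution function of $\mu$ to a grid of mass $1/K$. We then bound the two parts separately, choosing the support points of $\hat\mu$ so that the means nearly agree and any residual mean error trades off against the tail error.

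\textbf{Construction.} Let $F(z)=\Pr_\mu(Z\le z)$ and let $F^{-1}(t)=\min\lrc{z:F(z)\ge t}$ be its quantile function. For $k=1,\ldots,K$, consider the quantile interval $I_k=((k-1)/K,k/K]$. We would map the mass $1/K$ of each interval to one point $z_k$ in $\lrc{0,\ldots,Q-1}$ lying between $F^{-1}((k-1)/K+)$ and $F^{-1}(k/K)$. Then $\hat\mu=\sum_k(1/K)\delta_{z_k}$ has at most $K$ support values. Computing $F$ and the quantiles takes time linear in the support of $\mu$.

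\textbf{Tail bound.} Since each $z_k$ lies in the range of quantiles of $I_k$, the tails satisfy $\Pr_{\hat\mu}(Z>u)-\Pr_\mu(Z>u)\in[-1/K,1/K]$ for every $u$. More precisely, the difference is $(1/K)\cdot\#\lrc{k:z_k>u}-\Pr_\mu(Z>u)$, and only the single interval straddling $1-F(u)$ can be misassigned. We want the combined error, which is a signed quantity, to stay within $1/K$ and not $2/K$. For this we will choose $z_k$ so that the conditional mean of $\mu$ on $I_k$ is matched.

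\textbf{Handling the linear part; the main obstacle.} Write $m_k=K\int_{I_k}F^{-1}(t)\,dt$, the conditional mean over the interval. This value lies in $[F^{-1}((k-1)/K+),F^{-1}(k/K)]$ but need not be an integer. The fix is to allow two support points per interval, namely the floor and ceiling of $m_k$, with probabilities that exactly preserve the mean. This gives $2K$ points, so we would apply the construction with $\ceil{K/2}$ intervals, which is wasteful. Instead, we would analyze the whole expression directly.

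For a fixed $u$, the contribution of interval $I_k$ to $\sum_z(\mu-\hat\mu)(z)h_u(z)$ is $\int_{I_k}\bigl(h_u(F^{-1}(t))-h_u(z_k)\bigr)dt$. This is zero unless $I_k$ straddles $u$, because $h_u$ is affine, with slope $-1/Q$, on each side of $u$. Choosing $z_k$ with $z_k/K$ equal to the interval's mean contribution makes affine parts cancel exactly. For the straddling interval, the difference of $h_u$ values has the form $\ind-\ind'$ plus $(z_k-F^{-1}(t))/Q$. We then need to show that its integral over an interval of length $1/K$ has absolute value at most $1/K$.

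This follows because $h_u$ takes values in $(-1,1]$ with a downward jump structure: on the left of the jump $h_u(z)=-z/Q\in(-1,0]$, and on the right $h_u(z)=1-z/Q\in(0,1]$. Since $z\mapsto h_u(z)$ is monotone decreasing on each piece, we get $h_u(z)-h_u(z')\in(-1,1]$ whenever $z\le z'$ are ordered consistently. We take $z_k=\operatorname{round}(m_k)$ and absorb the rounding error of at most $1/(2Q)$ into this slack. Verifying this sign bookkeeping, so that errors within the straddling interval do not exceed its mass $1/K$, is the step we expect to be the most delicate.
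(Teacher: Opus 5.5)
Your split of the error into a tail part and a mean part is the right decomposition. The construction you settle on, $z_k=\operatorname{round}(m_k)$, does not satisfy the bound, so the last step cannot be completed. By your own computation, for fixed $u$ the error equals the indicator error of the straddling interval plus $\frac{1}{QK}\sum_k(z_k-m_k)$, where the sum runs over \emph{all} intervals. The first term can be arbitrarily close to $\pm1/K$. The rounding terms of different intervals can all have the same sign, so there is no slack to absorb them.

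Here is a concrete counterexample. Take $Q=20$, $K=4$, and $\mu(0)=0.1$, $\mu(1)=\mu(2)=0.25$, $\mu(3)=0.16$, $\mu(19)=0.24$.
\begin{itemize}
\item The conditional means of the four quantile intervals are $0.6,1.6,2.6,18.36$, so $\hat\mu=\frac14(\delta_1+\delta_2+\delta_3+\delta_{18})$.
\item Each $z_k$ lies in its quantile range, as your construction requires.
\item For $u=18$, the straddling interval $I_4$ has only a $0.04$ fraction of its mass at or below $u$, yet its rounded mean $18$ is at most $u$. Its indicator error is therefore $0.96/K=0.24$.
\item The other three intervals each round up by $0.4$, and $I_4$ rounds down by $0.36$. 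This gives a mean part of $0.84/(QK)=0.0105$.
\end{itemize}
Equivalently, $\mathbb{E}_\mu Z=5.79$ and $\mathbb{E}_{\hat\mu}Z=6$. The total error is $0.2505>1/K$. Your pointwise observation $h_u(z)-h_u(z')\in(-1,1]$ does not help, since it gives no control over the accumulated mean error.

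The missing idea is that you do not need mean matching; you need the CDF error to be one-sided. Take $z_k=F^{-1}(k/K)$, the right endpoint of each quantile interval. Writing $\Gamma=F$, this is the same as setting $\hat\Gamma(t)=\floor{K\Gamma(t)}/K$, and it gives $0\le\Gamma(t)-\hat\Gamma(t)<1/K$ for every $t$. Using $\mathbb{E}Z=\sum_{t=0}^{Q-1}(1-\Gamma(t))$, your two parts become:
\begin{itemize}
\item the tail part $\hat\Gamma(u)-\Gamma(u)\in(-1/K,0]$;
\item the mean part $\frac1Q\sum_{t=0}^{Q-1}(\Gamma(t)-\hat\Gamma(t))\in[0,1/K)$.
\end{itemize}
They have opposite signs, so their sum has absolute value less than $1/K$. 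This is exactly the paper's argument. In the example above it places the last atom at $19$ instead of $18$, and the error at $u=18$ drops to $-0.01+0.023=0.013$.
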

\begin{proof}
Let $\Gamma(t)=\sum_{z\le t}\mu(z)$ and $\Gamma(-1)=0$.
At each support value $z$, assign probability $\hat\mu(z)=(\floor{K\Gamma(z)}-\floor{K\Gamma(z-1)})/K$, keeping only positive probabilities.
These probabilities sum to $1$, and each positive one is at least $1/K$.
Thus, $\hat\mu$ has at most $K$ support values.

Let $\hat\Gamma(t)=\sum_{z\le t}\hat\mu(z)$.
Then, $\hat\Gamma(t)=\floor{K\Gamma(t)}/K$, so $0\le\Gamma(t)-\hat\Gamma(t)<1/K$.
Note that $\sum_z z\mu(z)=\sum_{t=0}^{Q-1}(1-\Gamma(t))$.
Then, $\sum_z\mu(z)h_u(z)=(1/Q)\sum_{t=0}^{Q-1}\Gamma(t)-\Gamma(u)$, and the same equality holds for $\hat\mu$ and $\hat\Gamma$.
Hence,
\[
 \sum_z(\mu(z)-\hat\mu(z))h_u(z)
 =\frac1Q\sum_{t=0}^{Q-1}(\Gamma(t)-\hat\Gamma(t))
 -(\Gamma(u)-\hat\Gamma(u)),
\]
where both the average and $\Gamma(u)-\hat\Gamma(u)$ lie in $[0,1/K]$, so their difference has absolute value at most $1/K$.
\end{proof}

We next compute a compressed distribution $\hat\mu_i$ for each $Y_i$.
Fix an integer $K\ge1$ and start from $\hat\mu_0=\mu_0$.
At client $v_i$, we first compute an approximate distribution of $Y_i=(Y_{i-1}+D_{v_i})\bmod Q$ using $\hat\mu_{i-1}$:
\begin{equation}\label{eq:mod-convolution}
\tilde\mu_i(z)=
\sum_{j=1}^{s_{v_i}}p_{v_i,j}
\hat\mu_{i-1}((z-d_{v_i,j})\bmod Q).
\end{equation}
Since $\tilde\mu_i$ has at most $Ks_{v_i}$ support values, we further compress it using Lemma~\ref{lem:distribution-compression} to obtain $\hat\mu_i$ with at most $K$ support values.
Using these compressed distributions, we approximate the expected extra cost by
\begin{equation}\label{eq:initial-approx}
\hat R_0(u)=\EE{\ell(\bm{D})}
+2\sum_{i=1}^n b_i\sum_z\hat\mu_i(z)h_u(z).
\end{equation}

\begin{lemma}\label{lem:initial-error}
For every integer $K\ge1$ and initial stock $u$, the compressed distributions satisfy $|\sum_z(\mu_i(z)-\hat\mu_i(z))h_u(z)|\le i/K$ for $0\le i\le n$.
Moreover, $|R_0(u)-\hat R_0(u)|\le2nc(T)/K$.
\end{lemma}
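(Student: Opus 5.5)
We prove the first claim by induction on $i$, splitting each step's error into a propagation term and a compression term. For $i=0$ we have $\hat\mu_0=\mu_0$, so the error is $0$. For $i\ge1$, write
\[
\mu_i-\hat\mu_i=(\mu_i-\tilde\mu_i)+(\tilde\mu_i-\hat\mu_i).
\]
Lemma~\ref{lem:distribution-compression}, applied to $\tilde\mu_i$ with parameter $K$, bounds the second term tested against $h_u$ by $1/K$ for every $u$. So it suffices to show $|\sum_z(\mu_i(z)-\tilde\mu_i(z))h_u(z)|\le(i-1)/K$.

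Both $\mu_i$ and $\tilde\mu_i$ arise from the same modular convolution with the law of $D_{v_i}$, applied to $\mu_{i-1}$ and $\hat\mu_{i-1}$ respectively. Hence
\[
\sum_z(\mu_i-\tilde\mu_i)(z)h_u(z)=\sum_j p_{v_i,j}\sum_y(\mu_{i-1}-\hat\mu_{i-1})(y)\,h_u((y+d_{v_i,j})\bmod Q).
\]
The key step is that a shifted test function $z\mapsto h_u((z+d)\bmod Q)$ is a combination of functions $h_{u'}$ plus a constant, and the constant vanishes because $\mu_{i-1}-\hat\mu_{i-1}$ has total mass zero.

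To see this, write $(z+d)\bmod Q=z+d-Q\,\ind_{\lrc{z\ge Q-d}}$. Then $-(z+d)\bmod Q/Q$ equals $-z/Q$ plus a constant plus $\ind_{\lrc{z>Q-1-d}}$. The indicator $\ind_{\lrc{(z+d)\bmod Q>u}}$ is the indicator of a cyclic interval of $z$ values. It can be written as $\pm\ind_{\lrc{z>a}}$ for a suitable threshold $a$, plus possibly a second indicator $\ind_{\lrc{z>b}}$, plus a constant.

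The bookkeeping must show that the indicators combine so the total is $h_{u'}(z)$ plus a constant for a single $u'$ with coefficient $1$. This is the main obstacle. The natural guess is $u'=(u-d)\bmod Q$, with the wrap indicator $\ind_{\lrc{z>Q-1-d}}$ cancelling against the wrap part of the cyclic interval. Indeed, the stock interpretation says a shift of $Y$ by $d$ corresponds to a shift of the initial stock, so $h_u$ at a shifted argument should equal $h_{u'}$ up to a constant.

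I would verify this identity directly by checking the two cases $u\ge d$ and $u<d$. If a single $h_{u'}$ does not come out, I would instead obtain a difference of two indicators. In that case I would fall back on the $\Gamma$-representation from the proof of Lemma~\ref{lem:distribution-compression}: shifting the argument cyclically permutes the CDF differences, so each test value is an average of CDF errors minus one CDF error, and the bound still follows provided the CDF errors keep a consistent sign. Granting the identity, the inner sum is bounded by $(i-1)/K$ by induction, since the $p_{v_i,j}$ sum to $1$. This completes the induction.

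For the second claim, Lemma~\ref{lem:initial-cost} and \eqref{eq:initial-approx} give
\[
|R_0(u)-\hat R_0(u)|\le 2\sum_{i=1}^n|b_i|\,\frac{i}{K}\le\frac{2n}{K}\sum_{i=1}^n|\rho_{v_i}-\rho_{v_{i+1}}|.
\]
By the triangle inequality, $|\rho_{v_i}-\rho_{v_{i+1}}|\le c(v_i,v_{i+1})$, so the sum is at most $c(T)$. This gives $|R_0(u)-\hat R_0(u)|\le 2nc(T)/K$.
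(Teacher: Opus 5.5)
Your proposal is correct and follows the paper's proof essentially step for step: the same split of the error through $\tilde\mu_i$, the same shift identity that turns the propagated error into $\sum_j p_{v_i,j}e_{i-1}((u-d_{v_i,j})\bmod Q)$, and the same bound $\sum_i i|b_i|\le n\,c(T)$. The identity you conjectured holds exactly, namely $h_u((y+d)\bmod Q)=h_{(u-d)\bmod Q}(y)+h_u(d\bmod Q)$. Checking the two cases $u\ge d$ and $u<d$ confirms it, so the CDF-sign fallback is never needed.
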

\begin{proof}
Let $e_i(u)=\sum_z(\mu_i(z)-\hat\mu_i(z))h_u(z)$, with $e_0(u)=0$.
By inserting $\tilde\mu_i$, we obtain
\[
e_i(u)=
\sum_z(\mu_i(z)-\tilde\mu_i(z))h_u(z)
+\sum_z(\tilde\mu_i(z)-\hat\mu_i(z))h_u(z).
\]
The second error term is the compression error, whose absolute value is at most $1/K$ by Lemma~\ref{lem:distribution-compression}.

To bound the first error term, let $f(z)=(z-d_{v_i,j})\bmod Q$ for each fixed $j$.
By independence and \eqref{eq:mod-convolution}, 
\[
 \sum_z(\mu_i(z)-\tilde\mu_i(z))h_u(z)
 =\sum_{j=1}^{s_{v_i}}p_{v_i,j}\sum_z(\mu_{i-1}(f(z))-\hat\mu_{i-1}(f(z)))h_u(z).
\]
By the definition of $h_u$, we have $h_u(z)=h_{(u-d_{v_i,j})\bmod Q}(f(z))+h_u(d_{v_i,j}\bmod Q)$.
Note that $\sum_z\mu_{i-1}(f(z))=\sum_z\hat\mu_{i-1}(f(z))=1$.
Thus, the first error term equals $\sum_jp_{v_i,j}e_{i-1}((u-d_{v_i,j})\bmod Q)$, and its absolute value is at most $\max_u|e_{i-1}(u)|$.
Therefore, $\max_u|e_i(u)|\le\max_u|e_{i-1}(u)|+1/K$, which is at most $i/K$ since $e_0=0$.

Finally, by Lemma~\ref{lem:initial-cost} and \eqref{eq:initial-approx}, we obtain
\[
|R_0(u)-\hat R_0(u)|\le2\sum_{i=1}^n|b_i||e_i(u)|\le\frac2K\sum_{i=1}^n i|b_i|\le\frac{2n}{K}c(T),
\]
where the last inequality follows from $|b_i|\le c(v_i,v_{i+1})$ and $\sum_i|b_i|\le c(T)$ by the triangle inequality.
\end{proof}

\subsection{Selecting the Initial Stock}

We now minimize $\hat R_0$ instead of $R_0$.
For fixed $z>0$, the function $h_u(z)$ decreases by $1$ when $u$ increases from $z-1$ to $z$ and is otherwise constant.
Thus, by \eqref{eq:initial-approx}, $\hat R_0$ changes only at positive support values of $\hat\mu_i$.
We select the smallest stock $u_K$ that minimizes $\hat R_0$ over $\lrc{0}\cup\bigcup_{i=1}^n\operatorname{supp}(\hat\mu_i)$, which contains at most $nK+1$ candidates.

\begin{lemma}\label{lem:initial-stock}
For every integer $K\ge1$, the selected initial stock satisfies
$R_0(u_K)\le\EE{\ell(\bm{D})}+2nc(T)/K$.
\end{lemma}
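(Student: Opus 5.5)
The argument compares $u_K$ against the average of $\hat R_0$ over all stocks, rather than against the (unknown) true minimizer of $R_0$. We will chain three facts:
\[
R_0(u_K)\le \hat R_0(u_K)+\frac{nc(T)}{K},\qquad
\hat R_0(u_K)=\min_u\hat R_0(u)\le \frac1Q\sum_{u=0}^{Q-1}\hat R_0(u),\qquad
\frac1Q\sum_{u=0}^{Q-1}\hat R_0(u)=\EE{\ell(\bm D)}.
\]
The first inequality is Lemma~\ref{lem:initial-error}, and a careful count gives the error $2nc(T)/K$ at most. The middle inequality states that the minimum is at most the average. The last equality holds because averaging $h_u(z)$ over $u$ gives zero.

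\textbf{Step 1: $u_K$ minimizes $\hat R_0$ over all stocks.} The restricted search over $\lrc{0}\cup\bigcup_i\operatorname{supp}(\hat\mu_i)$ loses nothing. As $u$ increases, $\hat R_0(u)$ changes only when $u$ passes a positive support value $z$ of some $\hat\mu_i$. Hence $\hat R_0$ is piecewise constant on $\lrc{0,\ldots,Q-1}$. Each constant block begins either at $0$ or at such a support value, so every value $\hat R_0(u)$ is attained at a candidate. Therefore $\hat R_0(u_K)=\min_{u\in\lrc{0,\ldots,Q-1}}\hat R_0(u)$.

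\textbf{Step 2: the average of $\hat R_0$ equals $\EE{\ell(\bm D)}$.} Fix $z\in\lrc{0,\ldots,Q-1}$. Then
\[
\frac1Q\sum_{u=0}^{Q-1}h_u(z)=\frac1Q\,\#\lrc{u:u<z}-\frac zQ=\frac zQ-\frac zQ=0 .
\]
Substituting into \eqref{eq:initial-approx}, the second term averages to zero for each $i$. This leaves $\frac1Q\sum_u\hat R_0(u)=\EE{\ell(\bm D)}$. The same computation applied to Lemma~\ref{lem:initial-cost} recovers the known identity for $R_0$, which is a useful consistency check. We then have $\hat R_0(u_K)\le\EE{\ell(\bm D)}$.

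\textbf{Step 3: transfer back to $R_0$.} Lemma~\ref{lem:initial-error} gives $R_0(u_K)\le\hat R_0(u_K)+2nc(T)/K\le\EE{\ell(\bm D)}+2nc(T)/K$, which is the claim.

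The only step needing care is Step 1. There we must check that the piecewise-constant structure indeed makes the candidate set sufficient. In particular, $h_u(0)=0$ for all $u$, so support values at $0$ cause no jumps, and $0$ is included to cover the first block.
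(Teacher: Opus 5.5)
Your proof is correct and follows the paper's argument. The minimum of $\hat R_0$ is at most its average, the average equals $\EE{\ell(\bm D)}$ because $\sum_u h_u(z)=0$, and Lemma~\ref{lem:initial-error} transfers the bound to $R_0$; your Step 1 just spells out the candidate-set justification that the paper places in the text before the lemma. The only blemish is the first inequality of your opening display: $nc(T)/K$ there should read $2nc(T)/K$, which is what Lemma~\ref{lem:initial-error} gives and what you correctly use in Step 3.
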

\begin{proof}
For each $z\in\lrc{0,\ldots,Q-1}$, exactly $z$ initial stocks satisfy $z>u$, so $\sum_{u=0}^{Q-1}h_u(z)=0$ by definition.
Taking the average in \eqref{eq:initial-approx} yields $(1/Q)\sum_{u=0}^{Q-1}\hat R_0(u)=\EE{\ell(\bm{D})}$.
Since $u_K$ minimizes $\hat R_0$, we obtain $\hat R_0(u_K)\le\EE{\ell(\bm{D})}$.
Moreover, by Lemma~\ref{lem:initial-error}, $R_0(u_K)\le\hat R_0(u_K)+2nc(T)/K$, which proves the lemma.
\end{proof}

We now combine the selected initial stock with the TSP tour from Lemma~\ref{lem:geometry}.

\begin{theorem}\label{thm:deterministic-priori}
For any $0<\varepsilon\le1$, the VRPSD admits deterministic a priori $(2+\varepsilon)$-approximation and $(3+\varepsilon)$-approximation algorithms for splittable and unsplittable deliveries, respectively, with running time $\O(n^7+(Sn/\varepsilon)\log(Sn/\varepsilon))$.
\end{theorem}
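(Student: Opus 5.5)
We take $T$ from Lemma~\ref{lem:geometry}, set $K=\lceil 4n/\varepsilon\rceil$, compute the compressed distributions $\hat\mu_1,\ldots,\hat\mu_n$ via \eqref{eq:mod-convolution} and Lemma~\ref{lem:distribution-compression}, and run both a priori policies from the initial stock $u_K$ of Lemma~\ref{lem:initial-stock}. With $K\ge 2n/\varepsilon$ (the extra factor can absorb the unsplittable doubling), Lemma~\ref{lem:initial-stock} gives $R_0(u_K)\le\EE{\ell(\bm D)}+\varepsilon c(T)/2$.

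For the splittable variant, the expected cost is $c(T)+R_0(u_K)\le c(T)+\EE{\ell(\bm D)}+\varepsilon c(T)/2$. By Lemma~\ref{lem:geometry} applied to each realization and taken in expectation, $c(T)+\EE{\ell(\bm D)}\le 2\EE{\Cs(\bm D)}\le2\Ops$. Also $c(T)\le 2\Cs(\bm q)-\ell(\bm q)\le 2\Cs(\bm q)$, so the error term is at most $\varepsilon\Ops$. The total is therefore at most $(2+\varepsilon)\Ops$.

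For the unsplittable variant, the cost is at most $c(T)+2R_0(u_K)\le c(T)+2\EE{\ell(\bm D)}+\varepsilon c(T)$. Since $c(T)+2\EE{\ell(\bm D)}\le 2\EE{\Cs(\bm D)}+\EE{\ell(\bm D)}\le 3\EE{\Cu(\bm D)}$, the bound becomes $3\Opu+\varepsilon c(T)$. The term $\varepsilon c(T)$ is at most $2\varepsilon\Opu$, so we tighten $K$ to $\lceil 4n/\varepsilon\rceil$, which makes the error $\varepsilon c(T)/2\le\varepsilon\Opu$. Under this choice the unsplittable error is $2\cdot 2nc(T)/K\le\varepsilon c(T)/2$ and the splittable error is $\varepsilon c(T)/4$, giving $(3+\varepsilon)$ and $(2+\varepsilon)$ respectively.

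The main obstacle is the running time.
\begin{itemize}
\item At step $i$, the convolution \eqref{eq:mod-convolution} produces at most $Ks_{v_i}$ weighted points. Sorting them by residue, merging duplicates, and computing prefix sums $\Gamma$ for the compression of Lemma~\ref{lem:distribution-compression} costs $\O(Ks_{v_i}\log(Ks_{v_i}))$. Summed over $i$, this is $\O(KS\log(KS))=\O((Sn/\varepsilon)\log(Sn/\varepsilon))$.
\item $\EE{\ell(\bm D)}$ costs $\O(S)$ to compute.
\item Evaluating $\hat R_0$ at the at most $nK+1$ candidates is done by a sweep. Since $\sum_z\hat\mu_i(z)h_u(z)=\hat\Gamma_i(\text{avg term})-\hat\Gamma_i(u)$, each candidate's value changes by $2b_i\hat\mu_i(z)$ as $u$ passes $z$. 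After the initial value at $u=0$ is computed in $\O(nK)$ time, sorting all candidate points gives $\O(nK\log(nK))$. Because $n\le S$, this is within the bound.
\end{itemize}
Arithmetic precision is handled by working with rationals of the rounded form $j/K$, which keeps them polynomial in size. Adding the $\O(n^7)$ cost of the tour construction from Lemma~\ref{lem:geometry} gives the stated running time $\O(n^7+(Sn/\varepsilon)\log(Sn/\varepsilon))$.
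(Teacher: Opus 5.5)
Your approach matches the paper's: the tour from Lemma~\ref{lem:geometry}, $K=\ceil{4n/\varepsilon}$, the stock $u_K$, and the same $\O(KS\log(KS))+\O(nK\log(nK))$ running-time accounting. However, the unsplittable bound as written does not go through.

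With $K=\ceil{4n/\varepsilon}$, Lemma~\ref{lem:initial-stock} gives an additive error $2nc(T)/K\le\varepsilon c(T)/2$ in $R_0(u_K)$. The unsplittable policy therefore pays $2\cdot 2nc(T)/K\le\varepsilon c(T)$, not $\varepsilon c(T)/2$. Your "tightening" step sets $K$ to the value it already had. The figures $\varepsilon c(T)/2$ and $\varepsilon c(T)/4$ that you then quote are what $K=\ceil{8n/\varepsilon}$ would give. With $K=\ceil{4n/\varepsilon}$ and your crude estimate $c(T)\le 2\Opu$, you only obtain $(3+2\varepsilon)\Opu$. Likewise, your opening sentence needs $K\ge 4n/\varepsilon$, not $K\ge 2n/\varepsilon$, to get the error $\varepsilon c(T)/2$.

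Either repair is easy.

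\emph{Option 1.} Take $K=\ceil{8n/\varepsilon}$. This is still $\O(n/\varepsilon)$, so the running time is unchanged, and with your crude estimate the ratio bound no longer needs $\varepsilon\le1$.

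\emph{Option 2.} Keep $K=\ceil{4n/\varepsilon}$, as the paper does, and do not discard the $-\ell(\bm q)$ in the tour bound:
\[
(1+\varepsilon)c(T)+2\EE{\ell(\bm D)}\le(1+\varepsilon)\bigl(2\EE{\Cs(\bm D)}-\EE{\ell(\bm D)}\bigr)+2\EE{\ell(\bm D)}=(2+2\varepsilon)\EE{\Cs(\bm D)}+(1-\varepsilon)\EE{\ell(\bm D)}.
\]
The right-hand side is at most $(3+\varepsilon)\EE{\Cs(\bm D)}\le(3+\varepsilon)\Opu$, because $\ell(\bm q)\le\Cs(\bm q)$ and $\varepsilon\le1$. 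This is exactly where the hypothesis $\varepsilon\le1$ enters the paper's proof.

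Your splittable case with $K=\ceil{4n/\varepsilon}$ is correct as written.
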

\begin{proof}
Set $K=\ceil{4n/\varepsilon}$, and use the tour $T$ from Lemma~\ref{lem:geometry} and the selected stock $u_K$ in both a priori policies.
By Lemma~\ref{lem:initial-stock},
$R_0(u_K)\le\EE{\ell(\bm{D})}+(\varepsilon/2)c(T)$.
Thus, their expected travel costs are at most
$(1+\varepsilon/2)c(T)+\EE{\ell(\bm{D})}$
and
$(1+\varepsilon)c(T)+2\EE{\ell(\bm{D})}$,
respectively.
By Lemmas~\ref{lem:benchmarks} and~\ref{lem:geometry}, $(1+\varepsilon/2)c(T)+\EE{\ell(\bm{D})}\le(2+\varepsilon)\EE{\Cs(\bm{D})}\le(2+\varepsilon)\Ops$ and $(1+\varepsilon)c(T)+2\EE{\ell(\bm{D})}\le(2+2\varepsilon)\EE{\Cs(\bm{D})}
+(1-\varepsilon)\EE{\ell(\bm{D})}\le(3+\varepsilon)\EE{\Cu(\bm{D})}
\le(3+\varepsilon)\Opu$.
Thus, we obtain the claimed approximation ratios.

Next, we analyze the running time.
At client $v_i$, we compute at most $Ks_{v_i}$ values $(z+d_{v_i,j})\bmod Q$, each with probability $\hat\mu_{i-1}(z)p_{v_i,j}$.
Sorting these values and combining equal values computes $\tilde\mu_i$, and then one scan computes $\hat\mu_i$.
Summing over all clients takes $\O(KS\log(KS))$ time.

To select $u_K$, we first compute $\hat R_0(0)$ in $\O(S+nK)$ time.
By \eqref{eq:initial-approx}, sorting the at most $nK$ support values and updating $\hat R_0$ at these values finds the smallest stock with the minimum value in $\O(nK\log(nK))$ time.
Since $S\ge n$ and $K=\O(n/\varepsilon)$, including the tour construction in Lemma~\ref{lem:geometry} yields the stated running time.
\end{proof}



\section{Lower Bounds on Approximation Ratios}\label{sec:lower}
In this section, we show that no a priori or restocking policy can guarantee an approximation factor below $2$ for splittable deliveries or below $3$ for unsplittable deliveries.
We use the same instance for both variants. 

\subsection{The Stochastic Instance}\label{sec:known-lower}
Fix an integer $m\ge2$, and set $N=m^4$, $p=m^{-3}$, $Q=2m^8$, and $\eta=m^{-8}$.
Consider the graph in Figure~\ref{fig:hard-instance}(a), with a depot $r$, bridge clients $x_1,\ldots,x_{2m}$, peak clients $y_1,\ldots,y_N$, and a non-client vertex $z$.
The edges $rx_j$ and $zx_j$ have cost $1/2$, and each edge $zy_i$ has cost $\eta/2$.
The metric is defined by minimum-cost paths.
For instance, the cost between distinct bridges is $1$ and between distinct peaks is $\eta$, and it holds that $c(r,y_i)=1+\eta/2$ and $c(x_j,y_i)=1/2+\eta/2$. 
Moreover, every bridge has demand $1$, and each peak has demand
\begin{equation}\label{eq:two-point}
 D_{y_i}=\begin{cases}
 Q-2,&\text{with probability }p,\\
 4,&\text{with probability }1-p.
 \end{cases}
\end{equation}
We call a peak \emph{large} or \emph{small} according to its realized demand.

\begin{figure}[t]
\centering
\caption{The stochastic instance and an adaptive depot tour. (a) The graph defining the metric, with demand $1$ at each bridge and demand $Q-2$ or $4$ at each peak. (b) A tour serving one large peak and two bridges, with total demand $Q$ and cost $2+\eta$.}\label{fig:hard-instance}
\vspace{-6pt}
\begin{subfigure}[t]{0.49\linewidth}
\centering
\caption{The bridge--peak metric}
\label{fig:bridge-peak-instance-a}
\begin{tikzpicture}[x=1.12cm,y=0.85cm,
  every node/.style={font=\small},
  client/.style={circle,draw,fill=white,inner sep=0pt,minimum size=5pt},
  depot/.style={rectangle,draw,fill=black,inner sep=0pt,minimum size=5pt},
  edge/.style={line width=0.55pt},
  route/.style={-{Stealth[length=4pt,width=3pt]},line width=0.75pt}]
\node[depot,label=left:$r$] (r) at (0,0) {};
\node[client,label=above:$x_1$] (x1) at (1.5,1.15) {};
\node[client,label=above:$x_2$] (x2) at (1.5,0.35) {};
\node at (1.5,-0.32) {$\vdots$};
\node[client,label=below:$x_{2m}$] (xm) at (1.5,-1.15) {};
\node[circle,draw,fill=black!15,inner sep=0pt,minimum size=5pt,
      label=below:$z$] (z) at (3,0) {};
\node[client,label=right:$y_1$] (y1) at (4.55,1.15) {};
\node[client,label=right:$y_2$] (y2) at (4.55,0.35) {};
\node at (4.55,-0.32) {$\vdots$};
\node[client,label=right:$y_N$] (yn) at (4.55,-1.15) {};
\draw[edge] (r)--node[above left,inner sep=1pt] {$\tfrac12$} (x1)
  --node[above right,inner sep=1pt] {$\tfrac12$} (z);
\draw[edge] (r)--(x2)--(z) (r)--(xm)--(z);
\draw[edge] (z)--node[above left,inner sep=1pt] {$\tfrac\eta2$} (y1);
\draw[edge] (z)--(y2) (z)--(yn);
\end{tikzpicture}
\end{subfigure}\hfill
\begin{subfigure}[t]{0.49\linewidth}
\centering
\caption{An adaptive delivery route}
\label{fig:bridge-peak-instance-b}
\begin{tikzpicture}[x=1.12cm,y=0.85cm,
  every node/.style={font=\small},
  client/.style={circle,draw,fill=white,inner sep=0pt,minimum size=5pt},
  depot/.style={rectangle,draw,fill=black,inner sep=0pt,minimum size=5pt},
  edge/.style={line width=0.55pt},
  route/.style={-{Stealth[length=4pt,width=3pt]},line width=0.75pt}]
\node[depot,label=left:$r$] (rr) at (0,0) {};
\node[client,label=above:$x_{2j-1}$] (xa) at (1.7,1.15) {};
\node[client,label=below:$x_{2j}$] (xb) at (1.7,-1.15) {};
\node[circle,draw,fill=black,inner sep=0pt,minimum size=5pt,
      label=right:{$y$ (large)}] (yy) at (3.65,0) {};
\draw[route] (rr)--node[above left,inner sep=1pt] {$\tfrac12$} (xa);
\draw[route] (xa)--node[above right,inner sep=1pt] {$\tfrac12+\tfrac\eta2$} (yy);
\draw[route] (yy)--node[below right,inner sep=1pt] {$\tfrac12+\tfrac\eta2$} (xb);
\draw[route] (xb)--node[below left,inner sep=1pt] {$\tfrac12$} (rr);
\node at (1.75,0.2) {demand $Q$};
\node at (1.75,-0.2) {cost $2+\eta$};
\end{tikzpicture}
\end{subfigure}
\end{figure}

We next bound the expected cost of an adaptive policy for comparison.

\subsection{An Adaptive Comparison Policy}
We first visit all peaks on a depot tour of cost $2+N\eta$, serve the small peaks, and defer service to the large peaks.
This is feasible since the total small demand is at most $4N<Q$.

Let $K$ be the number of large peaks.
We then pair each of $\min\lrc{K,m}$ large peaks with two unused bridges and serve them on a depot tour $r,x_{2j-1},y,x_{2j},r$ of stock $Q$ and cost $2+\eta$, as shown in Figure~\ref{fig:hard-instance}(b).
Next, we serve each remaining large peak alone on a depot tour of cost $2+\eta$.
If $K<m$, the $2(m-K)$ remaining bridges can be served on one depot tour of cost $2(m-K)$, since their total demand is at most $2m<Q$.
Thus, the policy is feasible for both variants and has total cost
\begin{equation}\label{eq:adaptive-cost}
(2+N\eta)+(2+\eta)K+2(m-K)\ind_{K<m} =  2+2\max\lrc{K,m}+(N+K)\eta.
\end{equation}

We next bound the expected cost.
By definition, $\EE{K}=Np=m$ and $\operatorname{Var}(K)=Np(1-p)=m(1-p)$.
Then, $2\EE{\max\lrc{K-m,0}}=\EE{\size{K-m}}\le\sqrt{m(1-p)}$.
By \eqref{eq:adaptive-cost} and $\max\lrc{K,m}=m+\max\lrc{K-m,0}$, 
\begin{equation}\label{eq:adaptive-mean}
 \OPT_\nu\le2m+\sqrt{m(1-p)}+2+(N+m)\eta.
\end{equation}

\subsection{Splittable Deliveries}\label{sec:split-lower}
We first establish the lower bound for splittable a priori and restocking policies.
Fix a client order $\sigma$ chosen before observing any demand, and let $\cost_{\spv}$ denote the travel cost of a policy following $\sigma$.

To lower bound this cost, we expand each metric edge into a minimum-cost path, retain the original service locations, and contract the edges $zy_i$.
Since every peak has positive demand, each edge $zy_i$ is traversed at least twice, so the original cost is at least the contracted cost plus $N\eta$.
In the contracted graph, all peaks are at $z$.
The cost between $r$ and $z$ is $1$, from each bridge to $r$ or $z$ is $1/2$, and between distinct bridges is $1$.

We select each large peak if both neighbors are peaks and no other large peak occurs within two positions on either side.
Each selected peak defines a small--large--small block, and these blocks are disjoint.
Let $Z_{\spv}$ denote the number of selected blocks.
Note that at most $4m+2$ peaks are either at the beginning or end of $\sigma$ or adjacent to a bridge.
For every remaining peak, both neighbors are peaks.
Conditional on being large, it is not selected only if another large peak occurs within two positions on either side, which has probability at most $4p$ by a union bound.
Since each peak is large with probability $p$, by linearity of expectation,
\begin{equation}\label{eq:expected-good-split}
\mathbb E[Z_{\spv}\mid\sigma]\ge (N-4m-2)p(1-4p)\ge Np-(4m+2)p-4Np^2=m-\frac{8m^2+2m}{N}.
\end{equation}

\begin{lemma}\label{lem:ordered-split}
For every demand vector, a feasible splittable routing that follows $\sigma$ and serves each client before first visiting the next has cost at least $2m+2Z_{\spv}+N\eta$.
\end{lemma}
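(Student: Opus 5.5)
We work in the contracted graph described just before the statement, where every peak sits at $z$, and we have already accounted for the additive $N\eta$ from the edges $zy_i$. It therefore suffices to show that the contracted cost is at least $2m+2Z_{\spv}$.

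We decompose the routing into depot tours, meaning the maximal segments between consecutive visits to $r$. The clients are visited in order $\sigma$, and each client is fully served before the next first visit. Each depot tour covers the stretch of $\sigma$ it traverses and delivers at most $Q$ units.

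We use a charging scheme. In the contracted metric, every bridge $x_j$ is at distance $1/2$ from both $r$ and $z$, and bridges are at distance $1$ from each other. Hence each maximal run of consecutive client-visits at bridges within a walk costs at least $1/2$ per entry and $1/2$ per exit. More simply, each bridge visit occurs between two locations in $\{r,z\}\cup\{\text{bridges}\}$, and every edge incident to a bridge has cost at least $1/2$. Every bridge is visited at least once and has degree at least $2$ in the walk, so charging half of each incident edge to it gives at least $1/2\cdot 2\cdot\frac12\cdot 2$. The clean version we aim for is that the $2m$ bridges together force cost at least $2m$. Bridge-to-bridge edges cost $1$, and edges from a bridge to $r$ or $z$ cost $1/2$. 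Assigning each edge's cost equally to its bridge endpoints gives each bridge at least $1/2+1/2=1$, because each bridge has at least two incident edge-ends. These charges come only from edges incident to bridges.

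It remains to find an additional cost $2$ per selected block using edges not incident to bridges, that is, traversals between $r$ and $z$ (cost $1$) or $z$–$z$ moves (cost $0$).

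Fix a selected block small--large--small, with peaks $a,b,c$ in order. All three are consecutive in $\sigma$, and their total demand is $4+(Q-2)+4=Q+6>Q$. So while the routing serves these three clients, it must replenish at least once. Because each client is completely served before the next first visit, the demand of $b$ together with that of $a$ or $c$ cannot fit in one load, and some depot return occurs strictly between the first visit of $a$ and the end of serving $c$. All three are at $z$, so each such return is a roundtrip $z\to r\to z$ with cost $2$ that uses no bridge edges. (If the route passed through a bridge on the way to $r$, we shortcut; the triangle inequality shows this cannot reduce the cost below the charge, so we may assume direct travel or account for it via the $1/2+1/2$ bridge charges.)

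The main obstacle is ensuring that these roundtrips are disjoint across blocks and disjoint from the bridge charges. Blocks are disjoint intervals of $\sigma$, and the corresponding time intervals of service do not overlap, so their roundtrips are distinct. To keep a path $z\to x\to r$ from being double-counted, we would treat the walk as a sequence of location visits and bound it by the number of $r$–$z$ crossings. Each such crossing either uses the direct edge (cost $1$) or passes through a bridge (cost $1$, but then it contributes to that bridge's degree). The cleanest handling is to apply the bridge charge only to the first visit of each bridge, which is a client stop, and to count a return inside a block as a separate excursion whose cost is at least $2$ regardless of the route taken. Summing the bridge charges and the block charges, and adding back $N\eta$, gives the bound $2m+2Z_{\spv}+N\eta$.
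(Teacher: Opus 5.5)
Your first plan has a double-counting gap, which you notice but do not close. You charge each bridge through all of its incident walk edges and then look for the extra $2$ per block among edges not incident to bridges, claiming the block's depot return is a roundtrip $z\to r\to z$ ``that uses no bridge edges.'' Nothing forces this. The vehicle may travel through already-served bridges (e.g.\ $z\to x_1\to r\to x_1\to z$), and in the expanded graph underlying the contraction every passage between $r$ and $z$ goes through some bridge. The parenthetical fix is also not valid: shortcutting only lowers the cost, and ``accounting for it via the bridge charges'' spends the same edges twice. Your final sentence names the right repair: charge each bridge only at its first visit, and give each block its whole excursion. But the disjointness of these two charges is only asserted, and that disjointness is the real content of the lemma.

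To close the gap, index the walk's steps by time, after deleting zero-length steps. For a selected block $a,b,c$, let $I$ be the stretch of the walk from the first visit of $a$ until $c$ is fully served. First visits follow $\sigma$, and each client is fully served before the next first visit. Hence $I$ contains no first visit of any other client, so the arrival and departure steps at each bridge's first visit lie outside every such $I$, and distinct blocks give disjoint stretches.

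Each such step costs at least $1/2$ in the contracted metric. A step joining two bridges costs $1$ and can be split between them, so these steps cost at least $2m$ in total. Each $I$ starts and ends at $z$ and must visit $r$, because it serves $Q+6$ units starting from stock at most $Q$; so it costs at least $2$.

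Made explicit in this way, your local charging is a valid alternative to the paper's argument. The paper instead deletes the block stretches, each of cost at least $2$ and containing no bridge service. It then shortcuts the remaining closed walk to a tour through $r$ and the $2m$ bridge service locations, of cost $1/2+(2m-1)+1/2=2m$. The paper's global shortcut makes disjointness automatic; yours needs the time-indexed bookkeeping above but is otherwise equivalent.
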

\begin{proof}
Each selected block has total demand $4+(Q-2)+4=Q+6>Q$.
Consider the part of the walk from the first delivery at its left small peak until its right small peak is fully served.
The stock at its start is at most $Q$, so the vehicle must visit the depot during this interval.
After contraction, this part of the walk starts and ends at $z$ and visits $r$, so its cost is at least $2$.

These parts of the walk are disjoint and contain no bridge service.
Removing them leaves a closed depot walk that still visits all $2m$ bridges at their service locations.
By shortcutting to these locations, its cost is at least $1/2+(2m-1)+1/2=2m$.
Thus, adding the removed costs and $N\eta$ proves the claimed bound.
\end{proof}

By Lemma~\ref{lem:ordered-split} and \eqref{eq:expected-good-split}, every splittable policy following $\sigma$ satisfies
\begin{equation}\label{eq:known-lower-split}
 \mathbb E[\cost_{\spv}\mid\sigma]\ge4m-\frac{16m^2+4m}{N}+N\eta.
\end{equation}
Since the bound holds for every $\sigma$, it also holds after averaging over a randomized client order.

\subsection{Unsplittable Deliveries}\label{sec:unsplit-lower}
We next establish the lower bound for unsplittable a priori and restocking policies.
We use the same client order $\sigma$ and the contracted graph from the previous subsection.
Let $\cost_{\unv}$ be the travel cost of such a policy.

We call a large peak \emph{isolated} if all its neighbors in $\sigma$ are small peaks, including when it is at an endpoint of $\sigma$.
Let $Z_{\unv}$ be the number of isolated large peaks.
At most $4m$ peaks are adjacent to bridges.
Among the at most $N-1$ pairs of consecutive peaks, each pair consists of two large peaks with probability $p^2$, preventing both from being isolated.
By linearity of expectation,
\begin{equation}\label{eq:expected-good-unsplit}
\mathbb E[Z_{\unv}\mid\sigma]\ge (N-4m)p-2(N-1)p^2\ge Np-4mp-2Np^2=m-\frac{6m^2}{N}.
\end{equation}

\begin{lemma}\label{lem:ordered-unsplit}
For every demand vector, a feasible unsplittable routing that follows $\sigma$ and serves each client before first visiting the next has cost at least $2m+4Z_{\unv}-2+N\eta$.
\end{lemma}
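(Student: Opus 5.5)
We mirror the proof of Lemma~\ref{lem:ordered-split}, working in the contracted graph. The extra $N\eta$ again comes from traversing each edge $zy_i$ at least twice. The aim is to show that each isolated large peak forces the walk to make two separate depot excursions from $z$, each of cost at least $2$. These excursions are charged in addition to a residual walk that still visits all $2m$ bridges, and we lose only an additive $2$ overall.

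Fix an isolated large peak $y$ with small neighbors $y'$ (before) and $y''$ (after), where we allow $y'$ or $y''$ to be absent when $y$ is at an end of $\sigma$. Consider the part of the walk from the delivery at $y'$ until the delivery at $y''$. The stock just after serving $y'$ is at most $Q-4$. Since $y$'s demand $Q-2$ exceeds this, the vehicle must visit the depot between $y'$ and $y$; this is the first excursion. Since service is unsplittable, $y$ is served in one delivery of $Q-2$, leaving stock at most $2<4$. The vehicle must therefore visit the depot again before serving $y''$; this is the second excursion. In the contracted graph all peaks sit at $z$, and the walk cannot pick up bridges in between because the neighbors in $\sigma$ are peaks. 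Hence the segment from $y'$ to $y$ is a closed walk at $z$ through $r$ of cost at least $2$, and likewise the segment from $y$ to $y''$. Each isolated large peak thus contributes at least $4$ on its own interval.

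If $y$ is first in $\sigma$, the first excursion is replaced by the initial leg $r\to z$ of cost $1$. If $y$ is last, the second excursion is replaced by the final leg $z\to r$, also of cost $1$. This endpoint effect is the source of the $-2$: at most one isolated large peak sits at each end, and each such peak loses $1$.

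Next we check disjointness. The intervals attached to distinct isolated large peaks overlap at most at a shared small peak, which is only a point of service, so their costs add. Removing all these intervals, including the initial or final legs used above, leaves a closed walk from the depot that still visits every bridge at its service location, since no bridge is served inside any interval. Shortcutting that walk gives cost at least $1/2+(2m-1)+1/2=2m$. One subtlety arises when an endpoint interval is removed: the remaining walk may no longer begin at $r$. We handle this by accounting for the endpoint case as ``$r\to z$ costs $1$, counted once'' and attaching the rest of the walk at $z$. A walk from $z$ through all bridges back to $r$ costs at least $2m$ as well, so the bound survives with the stated loss of at most $2$ in total.

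The main obstacle is this bookkeeping: making the removed intervals, the endpoint legs, and the residual bridge walk genuinely disjoint, and pinning the total loss to exactly $2$. We would handle it by cutting the walk at the service times of the relevant small and large peaks and summing the pieces.
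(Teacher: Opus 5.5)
Your argument is correct, but it takes a different route from the paper.

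The paper works with depot tours. It uses the fact that the clients served on each tour are consecutive in $\sigma$ to show that every isolated large peak must be served alone on a tour of cost at least $2$. It then splits the remaining clients into $Z_{\unv}+1$ segments that no tour can cross. Each of the $Z_{\unv}-1$ middle segments gains $2$ from its first and last $r$--$z$ edges, and the bridges are charged by half the cost of each incident edge. The $-2$ arises because the two end segments get no such bonus.

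You instead reuse the interval-cutting template of Lemma~\ref{lem:ordered-split}, cutting the single walk at the delivery times of $y'$, $y$, $y''$. Unsplittability then forces two depot visits: one before $y$, since at most $Q-4$ remains after $y'$, and one after $y$, since at most $2$ remains. This gives two edge-disjoint $z$--$r$--$z$ pieces of cost at least $2$ each, or $1+2$ for an isolated large peak at an end of $\sigma$. The residual walk, with endpoints in $\{r,z\}$, still serves all bridges and costs at least $1/2+(2m-1)+1/2=2m$. Your endpoint fix covers the $z$-to-$r$ case, and the $z$-to-$z$ case, when both ends of $\sigma$ are isolated large peaks, is identical.

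What each approach buys:
\begin{itemize}
\item Your route is uniform with the splittable lemma and avoids both the tour-consecutiveness step and the per-edge bridge charging.
\item It also gives the slightly sharper bound $2m+4Z_{\unv}-e+N\eta$, where $e\le 2$ is the number of isolated large peaks at the ends of $\sigma$. So the full loss of $2$ occurs only in that edge case.
\item The paper's route avoids the endpoint bookkeeping you flag, because each segment's tours are closed at $r$ automatically.
\item It also makes explicit the structural fact that isolated large peaks need their own tours.
\end{itemize}
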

\begin{proof}
We first discard depot tours that serve no client.
The clients served on each remaining depot tour occupy consecutive positions in $\sigma$.
If such a tour serves an isolated large peak and another client, it must also serve a neighboring small peak, whose combined demand is $(Q-2)+4>Q$.
Thus, each isolated large peak requires a depot tour serving no other client, with cost at least $2$ in the contracted graph.
These tours have total cost at least $2Z_{\unv}$.

Next, suppose $Z_{\unv}\ge1$.
The isolated large peaks divide the remaining clients into $Z_{\unv}+1$ segments, with possibly empty end segments.
A depot tour cannot serve clients from different segments.
For each segment, we join its depot tours at $r$ and shortcut to the service locations in their order.
Each of the $Z_{\unv}-1$ middle segments starts and ends with a small peak, so the first and last edges of its resulting tour have total cost $2$.
Each edge costs at least $1/2$ times its number of bridge endpoints.
Since each bridge has two incident edges, the remaining edges have total cost at least $b$ for a middle segment with $b$ bridges.
Thus, its cost is at least $2+b$.
Similarly, each end segment has cost at least its number of bridges.

Since all $2m$ bridges lie in these segments, their total cost together with the isolated-peak tours is at least $2(Z_{\unv}-1)+2m+2Z_{\unv}=2m+4Z_{\unv}-2$.
If $Z_{\unv}=0$, a closed depot walk visiting all bridges has cost at least $2m$.
Thus, adding $N\eta$ proves the claimed bound in both cases.
\end{proof}

By Lemma~\ref{lem:ordered-unsplit} and \eqref{eq:expected-good-unsplit}, every unsplittable policy following $\sigma$ satisfies
\begin{equation}\label{eq:known-lower-unsplit}
 \mathbb E[\cost_{\unv}\mid\sigma]\ge6m-2-\frac{24m^2}{N}+N\eta.
\end{equation}
Since this bound holds for every $\sigma$, it also holds after averaging over a randomized client order.

We now compare both lower bounds with the adaptive upper bound.

\begin{theorem}\label{thm:lower}
For every $\varepsilon>0$, there exists an instance on which every a priori or restocking policy has an approximation ratio relative to the adaptive optimum greater than $2-\varepsilon$ for splittable deliveries and $3-\varepsilon$ for unsplittable deliveries.
This holds even with randomized ordering and replenishment, and full knowledge of demands after the order is fixed.

\end{theorem}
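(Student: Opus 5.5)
Fix $\varepsilon>0$ and use the stochastic instance of Section~\ref{sec:known-lower} with $m$ chosen large enough. The proof is a ratio comparison: the lower bounds \eqref{eq:known-lower-split} and \eqref{eq:known-lower-unsplit} on any policy following a fixed order, divided by the adaptive upper bound \eqref{eq:adaptive-mean}.

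First we check that the lower bounds cover the claimed policy class. Lemmas~\ref{lem:ordered-split} and~\ref{lem:ordered-unsplit} hold for \emph{every} demand vector and \emph{every} feasible routing that follows $\sigma$ and serves each client before first visiting the next. Hence they apply even if the policy learns all demands right after $\sigma$ is fixed, and whatever (possibly randomized) replenishments it makes. The only demand-dependent quantities are $Z_{\spv}$ and $Z_{\unv}$. Their conditional expectations given $\sigma$ are bounded in \eqref{eq:expected-good-split} and \eqref{eq:expected-good-unsplit}, using only that demands are independent of $\sigma$. This independence holds because $\sigma$ is fixed before any demand is observed, even when $\sigma$ is random. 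Averaging over $\sigma$ and over the policy's internal randomness therefore preserves \eqref{eq:known-lower-split} and \eqref{eq:known-lower-unsplit}.

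Next we substitute the parameters $N=m^4$ and $\eta=m^{-8}$. Then $N\eta=m^{-4}$, $(N+m)\eta\le 2m^{-4}$, and $m^2/N=m^{-2}$. The splittable policy cost is at least $4m-16m^{-2}-4m^{-3}$, and the unsplittable cost is at least $6m-2-24m^{-2}$. The optimum satisfies $\OPT_\nu\le 2m+\sqrt m+2+2m^{-4}$. The ratios are therefore at least
\[
\frac{4m-O(m^{-2})}{2m+\sqrt m+3}\quad\text{and}\quad\frac{6m-2-O(m^{-2})}{2m+\sqrt m+3},
\]
which tend to $2$ and $3$ as $m\to\infty$. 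Choosing $m$ large enough that the ratios exceed $2-\varepsilon$ and $3-\varepsilon$ gives the instance. The same instance works for both variants because \eqref{eq:adaptive-mean} bounds both $\Ops$ and $\Opu$.

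The main work has already been done in the two lemmas. The only remaining care is the asymptotic bookkeeping, namely the $\sqrt m$ term from the variance bound. This term is $o(m)$, so it does not affect the limit.
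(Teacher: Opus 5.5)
Your proposal is correct and follows the paper's proof. Like the paper, you combine the lower bounds \eqref{eq:known-lower-split} and \eqref{eq:known-lower-unsplit}, which hold for every order, every replenishment choice and every revealed demand vector and survive averaging over a random order, with the adaptive upper bound \eqref{eq:adaptive-mean}, and let $m\to\infty$; your explicit substitution of $N=m^4$ and $\eta=m^{-8}$ just makes the paper's one-line limit argument concrete.
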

\begin{proof}
Since $N=m^4$ and $\eta=m^{-8}$, the ratios of the lower bounds in \eqref{eq:known-lower-split} and \eqref{eq:known-lower-unsplit} to the upper bound in \eqref{eq:adaptive-mean} tend to $2$ and $3$, respectively.
Both lower bounds hold for every client order and every choice of replenishments, even if all demands are known after the order is fixed.
Thus, choosing $m$ sufficiently large proves both claims on the same instance.
\end{proof}

\section{Applications}\label{sec:applications}

Our matroid-parity construction is independent of the demand distributions.
We apply it to several VRP variants with deterministic demands.
For each variant, let $\OPT$ denote its optimal solution value.

\subsection{Capacitated Vehicle Routing}\label{sec:app-cvrp}

The CVRP is a special case of the VRPSD with deterministic positive demands.
Thus, we immediately obtain randomized $2$-approximation and $3$-approximation algorithms for splittable and unsplittable deliveries, respectively.
These algorithms can be derandomized in polynomial time by considering the $\O(n)$ distinct shortage patterns induced by the initial stock $U$.

For unsplittable deliveries, given a TSP tour $T$, \citet{journal:friggstad:MOR:25} proposed an LP-based algorithm that computes in $n^{\O(1/\varepsilon)}$ time a solution of cost at most $c(T)+(1+\varepsilon)\ell(\bm{q})+\ln2\OPT$.
By Lemmas~\ref{lem:benchmarks} and~\ref{lem:geometry}, our tour satisfies $c(T)\le2\OPT-\ell(\bm{q})$.
Since $\ell(\bm{q})\le\OPT$, we obtain an approximation ratio of $2+\ln2+\varepsilon$.

\begin{corollary}\label{cor:cvrp}
For any constant $\varepsilon>0$, the CVRP admits $2$-approximation and $(2+\ln2+\varepsilon)$-approximation algorithms for splittable and unsplittable deliveries, respectively.
\end{corollary}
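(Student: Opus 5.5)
The plan is to treat the CVRP as the VRPSD with deterministic positive demands $\bm{q}$ and apply the earlier results pointwise. For splittable deliveries, we take the tour $T$ of Lemma~\ref{lem:geometry}. By Lemma~\ref{lem:gnr-cost}, the uniform initial stock $U$ gives $\mathbb E_U[\cost_{\spv}(T,\bm q)]=c(T)+\ell(\bm q)\le2\Cs(\bm q)$. When demands are deterministic, $\Cs(\bm q)$ is exactly $\OPT$, and the usual CVRP convention that every client has positive demand makes the requirement to visit every client harmless.

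To derandomize, we use the stock formula in the proof of Lemma~\ref{lem:gnr-cost}. The stock before $v_i$ is $(U-\sum_{j<i}q_{v_j})\bmod Q$, so whether a shortage occurs at each $v_i$ is determined by the position of $U$ relative to the $n$ residues $(\sum_{j\le i}q_{v_j})\bmod Q$. These residues split $\lrc{0,\ldots,Q-1}$ into at most $n+1$ intervals, and the cost is constant on each interval. We evaluate the cost once per interval, which takes $\O(n)$ time per evaluation, and keep the cheapest one. Its cost is at most the average over $U$, so it is at most $2\OPT$, and the total running time is $\O(n^7+n^2)$. Alternatively, the deterministic dynamic program of Theorem~\ref{thm:exact-priori} specializes directly, because every distribution is a point mass.

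For unsplittable deliveries, we feed the tour $T$ into the LP-based algorithm of \citet{journal:friggstad:MOR:25}. It produces a solution of cost at most $c(T)+(1+\varepsilon)\ell(\bm q)+\ln2\,\OPT$. Substituting $c(T)\le2\Cs(\bm q)-\ell(\bm q)\le2\OPT-\ell(\bm q)$ from Lemmas~\ref{lem:geometry} and~\ref{lem:benchmarks} gives a bound of $2\OPT+\varepsilon\ell(\bm q)+\ln2\,\OPT$. Since $\ell(\bm q)\le\OPT$ by Lemma~\ref{lem:benchmarks}, this is at most $(2+\ln2+\varepsilon)\OPT$. The running time is $\O(n^7)+n^{\O(1/\varepsilon)}$, which is polynomial for constant $\varepsilon$.

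The step that needs the most care is checking that the Friggstad et al.\ guarantee applies to an arbitrary given tour $T$ and not only to an $\alpha$-approximate one. In their bound, $c(T)$ appears additively, so any tour works. We also need their $\ell$ to coincide with our radial bound $\ell(\bm q)=(2/Q)\sum_v\rho_vq_v$, and we will confirm this by matching definitions. Everything else follows by direct substitution.
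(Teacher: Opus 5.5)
Your proposal is correct and follows the paper's argument. For splittable deliveries you derandomize over the $\O(n)$ shortage patterns determined by the prefix-demand residues, and for unsplittable deliveries you plug the matroid-parity tour bound $c(T)\le2\OPT-\ell(\bm q)$ into the guarantee of \citet{journal:friggstad:MOR:25}. The only slip is your aside that the dynamic program of Theorem~\ref{thm:exact-priori} ``specializes directly'': it runs in $\O(n^7+QS)$ time, which is pseudopolynomial in $Q$ even for point masses, so the interval-enumeration argument is the one that gives a polynomial-time $2$-approximation.
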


These results improve the previous approximation ratios of about $2.5-1/3000$~\citep{journal:blauth:MP:23} and $3.159$~\citep{journal:ebert:arXiv:26}, respectively.

\begin{remark}
The above results are direct applications of our tour construction. This construction can also be combined with more refined algorithms, such as that of \citet{journal:zhao:arXiv:26}, to further improve the approximation ratio for the unsplittable CVRP. 
We do not pursue such improvements here.
\end{remark}

\subsection{Multi-Depot Capacitated Vehicle Routing}
\label{sec:app-mdcvrp}
The multi-depot VRP (MD-VRP) extends the CVRP to multiple depots $D$, with input graph $G=(V\cup D,E)$.
The objective is to find a minimum-cost set of depot tours that serve all clients, with each tour starting and ending at the same depot.

By contracting all depots into a single depot $r$ and taking the metric closure, we obtain a graph $G/D$ with metric cost function $c_D$.
Let $\OPT_D$ denote the optimal cost of the corresponding CVRP instance on $G/D$, and define $\ell_D(\bm{q})=(2/Q)\sum_{v\in V}c_D(r,v)q_v$.
It holds that $\ell_D(\bm{q})\le\OPT_D\le\OPT$~\citep{journal:li:IJOC:90}.

Given a TSP tour $T$ in $G/D$, the tour-partition algorithm of \citet{journal:li:IJOC:90} computes in polynomial time a solution of cost at most $2c_D(T)+\ell_D(\bm{q})$ for splittable deliveries and $2c_D(T)+2\ell_D(\bm{q})$ for unsplittable deliveries.
The tree-partition algorithm of \citet{journal:harks:TS:13} computes in polynomial time a solution of cost at most $2\OPT_D+2\ell_D(\bm{q})$ for both variants.

By Lemmas~\ref{lem:benchmarks} and~\ref{lem:geometry}, our TSP tour in $G/D$ satisfies $c_D(T)\le 2\OPT_D-\ell_D(\bm{q})$.

\paragraph{Splittable case.}
Running the two partition algorithms yields a solution of cost at most
\[
\min\lrC{4\OPT_D-\ell_D(\bm{q}),
2\OPT_D+2\ell_D(\bm{q})}
\le\frac{10}{3}\OPT_D.
\]
Since $\OPT_D\le\OPT$, we obtain a $10/3$-approximation algorithm for the splittable MD-VRP.

\paragraph{Unsplittable case.}
By applying the LP-rounding method of \citet{journal:friggstad:MOR:25} to the tour-partition algorithm of \citet{journal:li:IJOC:90} with any fixed $0<\gamma\le\ln2$, \citet[Theorem~16]{journal:zhao:TCS:25} implicitly proved that the resulting LP-based tour-partition algorithm has cost at most 
$
2c_D(T)+(2e^{-\gamma}+\varepsilon)\ell_D(\bm{q})+\gamma\OPT.
$
Combining it with the tree-partition algorithm yields a solution of cost at most
\[
\min\lrC{
(4+\gamma)\OPT+
(2e^{-\gamma}-2+\varepsilon)\ell_D(\bm{q}),
2\OPT+2\ell_D(\bm{q})
}.
\]
Using $\ell_D(\bm{q})\le\OPT$ and choosing $\gamma$ as the positive root of $2e^\gamma=3+\gamma$, we obtain an approximation ratio of at most $3.792$ for sufficiently small $\varepsilon$.

\begin{corollary}\label{cor:md-unsplit}
The MD-VRP admits $10/3$-approximation and $3.792$-approximation algorithms for splittable and unsplittable deliveries, respectively.
\end{corollary}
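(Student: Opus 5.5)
The corollary combines two results already in the text: the splittable bound $\min\{4\OPT_D-\ell_D(\bm{q}),\,2\OPT_D+2\ell_D(\bm{q})\}\le\frac{10}{3}\OPT_D$, and the unsplittable bound from the LP-based tour partition together with the tree partition. I would make both rigorous as follows.

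\emph{Splittable case.} First, the tour bound must be valid in $G/D$. The instance on $G/D$ is a CVRP instance with single depot $r$ and metric $c_D$. Lemma~\ref{lem:geometry} applies verbatim to any metric and any deterministic demand vector, so the tour $T$ satisfies $c_D(T)\le 2\pi^*_{\spv}(\bm{q})-\ell_D(\bm{q})$. By Lemma~\ref{lem:benchmarks}, $\pi^*_{\spv}(\bm{q})\le\OPT_D$, where $\OPT_D$ is taken in the corresponding variant (splittable optimum at most unsplittable optimum). Hence $c_D(T)\le 2\OPT_D-\ell_D(\bm{q})$.

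Plugging this into the tour-partition bound of \citet{journal:li:IJOC:90} gives cost at most $4\OPT_D-\ell_D(\bm{q})$. We output the cheaper of this solution and the tree-partition solution. Setting $x=\ell_D(\bm{q})/\OPT_D\in[0,1]$, the minimum of $4-x$ and $2+2x$ is largest at $x=2/3$, where it equals $10/3$. Since $\OPT_D\le\OPT$, this gives the ratio $10/3$.

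\emph{Unsplittable case.} Substituting $c_D(T)\le 2\OPT_D-\ell_D(\bm{q})\le 2\OPT-\ell_D(\bm{q})$ into the LP-based bound gives cost at most $(4+\gamma)\OPT+(2e^{-\gamma}-2+\varepsilon)\ell_D(\bm{q})$. The tree-partition solution costs at most $2\OPT+2\ell_D(\bm{q})$. Let $x=\ell_D/\OPT\in[0,1]$. The first bound is decreasing in $x$ because $2e^{-\gamma}-2<0$ for $\gamma>0$. The second is increasing in $x$. So the worst case of the minimum is at their crossing point:
\[
x^*=\frac{2+\gamma}{4-2e^{-\gamma}-\varepsilon},
\]
with value $2+2x^*$.

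We need $x^*\le 1$, and checking this is the main obstacle. With $\varepsilon$ small it reduces to $2e^{-\gamma}\le 2-\gamma$, which holds for small $\gamma$. If instead $x^*>1$, the worst case is at $x=1$, giving $\min\{2+\gamma+2e^{-\gamma},\,4\}$.

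Optimizing $\gamma$ means minimizing
\[
2+\frac{2(2+\gamma)}{4-2e^{-\gamma}}=2+\frac{(2+\gamma)e^{\gamma}}{2e^{\gamma}-1}.
\]
Differentiating, the numerator of the derivative is
\[
(3+\gamma)e^{\gamma}(2e^{\gamma}-1)-(2+\gamma)e^{\gamma}\cdot 2e^{\gamma}=e^{\gamma}\bigl(2e^{\gamma}-3-\gamma\bigr).
\]
So the critical point is the positive root of $2e^\gamma=3+\gamma$, which is $\gamma\approx0.58$. This is below $\ln 2\approx0.693$, so the choice is admissible. At the root, $2e^\gamma-1=2+\gamma$, so the value simplifies to $2+e^{\gamma}\approx 2+1.79\approx 3.79$.

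It remains to confirm $x^*\le1$ at this $\gamma$. We have $2e^{-\gamma}\approx 1.12\le 2-0.58$, so it holds. Finally, choose $\varepsilon$ small enough that the limit value $3.7913\ldots$ plus its $\varepsilon$-perturbation stays below $3.792$.

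Running time is polynomial throughout. Lemma~\ref{lem:geometry} gives the tour in polynomial time, and for fixed $\varepsilon$ and $\gamma$ the partition algorithms run in polynomial time, so the final algorithm does as well.
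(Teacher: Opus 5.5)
Your proposal is correct and follows the paper's argument: you plug $c_D(T)\le 2\OPT_D-\ell_D(\bm{q})$ into the (LP-based) tour-partition bound, take the cheaper of that solution and the tree-partition solution, and optimize over $\ell_D(\bm{q})/\OPT\in[0,1]$. Your derivative computation just makes explicit why the paper's choice $2e^\gamma=3+\gamma$ gives the value $2+e^\gamma\approx3.7915$. The check $x^*\le1$ that you call the main obstacle is actually unnecessary, because the minimum of a decreasing and an increasing function never exceeds their common value at the crossing point, whether or not that point lies in $[0,1]$.
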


These results improve the previous approximation ratios of $3.9365$~\citep{conference:friggstad:IPCO:26} for unit demands, and about $4-1/1500$ and $4-1/50000$ for splittable and unsplittable deliveries, respectively~\citep{journal:zhao:TCS:25}.

\subsection{Regret-Bounded Vehicle Routing and School Bus Problem}
\label{sec:app-sbp}

The (additive) regret-bounded VRP (RVRP) is defined on the same metric graph as the CVRP.
Given $R\ge0$, the objective is to find a minimum number of paths starting at $r$ and covering all clients, such that the travel cost from $r$ to each client $v$ along its path is at most $\rho_v+R$.
For a path $P$ ending at $t$, its \emph{regret} is $c(P)-\rho_t$.
By the triangle inequality, regret is nondecreasing along the path, so bounding the regret of $P$ by $R$ ensures the constraint for every client.
If each route is additionally required to serve at most $Q$ clients, the problem is known as the school bus problem (SBP)~\citep{journal:bock:ALG:13}.

For $R=0$, the RVRP can be solved in polynomial time~\citep{conference:friggstad:STOC:14}.
For $R>0$, we modify the WGMP instance from Section~\ref{sec:geometry} by assigning each marker line for client $v$ weight $2R-4\rho_v$.
We compute a minimum-weight parity basis of weight $W$.
As in Lemma~\ref{lem:pairing}, each component of the forest formed by the selected client lines has one depot spoke in $G_1$ and one marker endpoint.
We keep the path from $r$ to this endpoint once and double the remaining edges.
If $p$ marker lines are selected, then shortcutting yields $m\le2p$ paths covering $V$, without increasing their total regret.
Let $\Delta_{\mathrm{reg}}$ denote the total regret of these paths.
Since each path from $r$ to its marker endpoint $v$ has cost at least $\rho_v$ by the triangle inequality, we obtain
\[
\Delta_{\mathrm{reg}}
\le 2c(F\cup E_{\mathrm{d}})
-4\sum_{v\in V_{\mathrm{m}}}\rho_v
=W-2pR
\quad\Longrightarrow\quad
mR+\Delta_{\mathrm{reg}}\le W.
\]

To bound $W$, consider an optimal RVRP solution with $\OPT$ paths.
We may assume that each client appears on exactly one path by shortcutting repeated clients.
For each path $P$ ending at $t$, we construct the corresponding lines using $P$ and the direct edge $rt$, as in Lemma~\ref{lem:embedding}.
Their total weight is
\[
2(c(P)+\rho_t)+2R-4\rho_t
=2(c(P)-\rho_t)+2R\le4R.
\]
Thus, $W\le4R\OPT$ and $mR+\Delta_{\mathrm{reg}}\le4R\OPT$.
\citet[Lemma~2.2]{conference:friggstad:STOC:14} showed that these paths can be split into at most $m+\Delta_{\mathrm{reg}}/R\le4\OPT$ paths, each with regret at most $R$.
Thus, we obtain a $4$-approximation algorithm for the RVRP.
Moreover, \citet[Lemma~1]{journal:bock:ALG:13} showed that a $\rho$-approximation algorithm for the RVRP yields a $(\rho+1)$-approximation algorithm for the SBP.

\begin{corollary}\label{cor:school-bus}
The RVRP and SBP admit $4$-approximation and $5$-approximation algorithms, respectively.
\end{corollary}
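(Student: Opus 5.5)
The corollary follows from the RVRP construction just described, together with the reduction of \citet[Lemma~1]{journal:bock:ALG:13}. For the RVRP, I will organize the argument in three steps. First, the computation: build the modified WGMP instance, whose marker lines have weight $2R-4\rho_v$. Compute a minimum-weight parity basis of weight $W$ in polynomial time with \citet{journal:iwata:SICOMP:22}; after the uniform weight shift this takes $\O(n^7)$ time, as in Lemma~\ref{lem:geometry}. Then decode the basis into $m$ paths with total regret $\Delta_{\mathrm{reg}}$, following the analogue of Lemma~\ref{lem:pairing}.

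Second, the upper bound on the decoded solution. Use the weight identity
\[
W=2c(F\cup E_{\mathrm{d}})+2pR-4\sum_{v\in V_{\mathrm{m}}}\rho_v.
\]
Each component yields a path from $r$ to its marker endpoint, of cost at least that endpoint's depot distance. The paths are obtained by doubling the off-path edges, shortcutting, and splitting at the marker endpoint; each marker line gives at most two paths. Together these give $mR+\Delta_{\mathrm{reg}}\le W$. I expect this step to be the main obstacle: the accounting of regret after shortcutting and splitting must be checked carefully, since splitting a doubled subtree into separate paths has to preserve the bound. I would first verify that each of the two components paired by a marker line produces one path whose regret is at most twice its tree cost minus twice the depot distance of its endpoint.

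Third, the lower bound. Embed each optimal path $P$, closed by the edge $rt$, into lines as in Lemma~\ref{lem:embedding}, using $t$ as the marker client. Each path contributes weight at most $4R$, so $W\le 4R\,\OPT$.

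To finish the RVRP bound, apply \citet[Lemma~2.2]{conference:friggstad:STOC:14}. It splits the paths into at most $m+\Delta_{\mathrm{reg}}/R\le W/R\le 4\OPT$ paths, each of regret at most $R$. This gives the $4$-approximation for the RVRP.

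For the SBP, invoke \citet[Lemma~1]{journal:bock:ALG:13}. It turns a $\rho$-approximation for the RVRP into a $(\rho+1)$-approximation for the SBP, here with $\rho=4$: the regret-feasible paths are cut into chunks of at most $Q$ clients, which costs at most $\lceil n/Q\rceil\le\OPT$ extra paths. This gives the $5$-approximation.
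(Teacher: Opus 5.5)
Your proposal is correct and follows the paper's argument step for step. It uses the same marker weights $2R-4\rho_v$, the same decoding with $mR+\Delta_{\mathrm{reg}}\le W$ (your planned check that each component's path has regret at most twice its tree cost minus $2\rho_v$ is exactly the paper's accounting), the same embedding of each optimal path closed by $rt$ with $t$ as marker client giving $W\le 4R\OPT$, and then the Friggstad--Swamy splitting lemma and the reduction of Bock et al. The only omission is the case $R=0$: there dividing by $R$ is invalid and $mR+\Delta_{\mathrm{reg}}\le W$ gives no control on $m$, so the paper handles it separately by noting that the RVRP with $R=0$ is solvable exactly in polynomial time.
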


These results improve the previous approximation ratios of $10$ and $11$ for the RVRP and SBP, respectively~\citep{journal:blauth:MP:23}.

\subsection{Capacitated Arc Routing}\label{sec:app-carp}

The capacitated arc routing problem (CARP) is defined on a complete graph $G=(V\cup\lrc{r},E)$ with required edges $E^+$, where each $e\in E^+$ has demand $q_e\in\lrc{1,\ldots,Q}$.
The objective is to find a minimum-cost set of depot tours serving all required-edge demands, with each tour serving total demand at most $Q$.
Each required edge must be traversed and its entire demand served by one tour. In the unit-demand variant, $q_e=1$ for every $e\in E^+$.
We assume metric edge costs, since \citet{journal:vanbevern:ORL:14} showed that any approximation ratio for the metric CARP also holds for the general CARP.
By taking separate copies of shared required-edge endpoints and removing vertices not incident to any required edge, we may assume that $E^+$ is a perfect matching on $V$.
We also let $n=\size{V}$. Then, $\size{E^+}=n/2$.

Let $\bm{q}=(q_e)_{e\in E^+}$ and $\rho_v=c(r,v)$.
For each $e=uv$, define $\eta_e=(\rho_u+c(e)+\rho_v)/2$ and $\ell_E(\bm{q})=(2/Q)\sum_{e\in E^+}q_e\eta_e$.
It holds that $\ell_E(\bm{q})\le\OPT$~\citep{journal:jansen:NETWORKS:93}.

We modify the WGMP instance from Section~\ref{sec:geometry}, using two vertex-disjoint complete graphs $G_1,G_2$ with vertex labels $V\cup\lrc{r_1}$ and $V\cup\lrc{r_2}$, respectively.
Each depot spoke in $G_1$ has $n-1$ parallel copies.
Client lines pair the corresponding edges $xy$ in the two graphs, with weight $2c(x,y)$.
Depot lines pair the spokes $r_1x,r_1y$ in $G_1$, with weight $2\rho_x+2\rho_y$.
For each $e=uv\in E^+$, its marker line pairs $r_2u,r_2v$ and has weight $2c(e)-2\eta_e$.

The client line for $e$ and its marker line cannot both be selected, since their edges form a cycle in $G_2$.
On the other hand, selecting all marker lines and the depot lines pairing required-edge endpoints forms a parity basis.
We seek a minimum-weight parity basis that selects exactly one of the client line and marker line for each required edge.
This can be achieved by reducing the weights of these two lines for each required edge by the same constant, larger than the sum of the absolute values of all original line weights.
The weight reduction first maximizes the number of selected lines corresponding to required edges and then minimizes the original total weight.
Since at most one line can be selected for each required edge and a parity basis selecting all marker lines exists, the resulting parity basis selects exactly one of these lines for each required edge and minimizes the original total weight among all such bases.
Let $W$ denote its original total weight.

Let $F$ be the forest formed by the selected client lines, and let $E_{\mathrm{d}}$ be the selected depot spokes in $G_1$.
As in Lemmas~\ref{lem:pairing} and~\ref{lem:decoding}, for each required edge $e=uv$ whose marker line is selected, we add $e$ between its two forest components to form a depot cycle $Z_e$.
We keep each cycle once and double the remaining edges.
Since $c(Z_e)\ge\rho_u+c(e)+\rho_v=2\eta_e$ by the triangle inequality, the resulting multigraph has cost at most $W$.
Every required edge is included, either in the forest or on an added cycle.
Taking an Eulerian tour and shortcutting while retaining each required edge yields an rural postman problem (RPP) tour $T$ with $c(T)\le W$.

To bound $W$, we apply the matching argument of Lemma~\ref{lem:allocation}, with required edges in place of clients and weights $\eta_e$.
This gives a partition $\P$ of $E^+$ and depot tours $T_B$ traversing each assigned edge once.
For each $B\in\P$, choose $e_B\in B$ maximizing $\eta_e$.
Then,
\[
\sum_{B\in\P}c(T_B)\le\OPT,
\qquad
2\sum_{B\in\P}\eta_{e_B}\ge\ell_E(\bm{q}).
\]
For each block $B$, removing $e_B$ from $T_B$ leaves two depot-rooted paths.
As in Lemma~\ref{lem:embedding}, their client and depot lines, together with the marker line for $e_B$, form a parity basis.
This basis also selects exactly one of the client line and marker line for each required edge, so both bases have the same weight reduction.
Hence,
\[
c(T)\le W
\le\sum_{B\in\P}\lra{2c(T_B)-2\eta_{e_B}}
\le2\OPT-\ell_E(\bm{q}).
\]

Given a RPP tour $T$, the tour-partition algorithm of \citet{journal:jansen:NETWORKS:93} computes in polynomial time a solution of cost at most $c(T)+\ell_E(\bm{q})$ for the unit-demand variant and $c(T)+2\ell_E(\bm{q})$ for the unsplittable variant.

Since our tour satisfies $c(T)\le2\OPT-\ell_E(\bm{q})$ and $\ell_E(\bm{q})\le\OPT$, we obtain the following result.

\begin{corollary}\label{cor:carp}
The CARP admits $2$-approximation and $3$-approximation algorithms for the unit-demand and unsplittable variants, respectively.
\end{corollary}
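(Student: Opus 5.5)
The corollary follows by combining the RPP tour bound just derived with the tour-partition guarantees of \citet{journal:jansen:NETWORKS:93}. We already have a polynomial-time construction of an RPP tour $T$ with $c(T)\le W\le 2\OPT-\ell_E(\bm{q})$, together with the lower bound $\ell_E(\bm{q})\le\OPT$. It remains to plug these into the two partition bounds and check that every step runs in polynomial time.

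\textbf{Unit-demand variant.} Feed $T$ to the tour-partition algorithm. It returns a solution of cost at most
\[
c(T)+\ell_E(\bm{q})\le 2\OPT-\ell_E(\bm{q})+\ell_E(\bm{q})=2\OPT .
\]

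\textbf{Unsplittable variant.} Here the cost is at most
\[
c(T)+2\ell_E(\bm{q})\le 2\OPT+\ell_E(\bm{q})\le 3\OPT .
\]
The reduction to metric costs with $E^+$ a perfect matching preserves ratios by \citet{journal:vanbevern:ORL:14} and the copying argument. Mapping copied endpoints back to the original vertices does not increase cost.

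\textbf{Polynomiality.} The modified WGMP instance has $\O(n)$ vertices and $\O(n^2)$ edges and lines. The weight reduction is by a constant of polynomial bit-length, so \citet[Theorem~11.1]{journal:iwata:SICOMP:22} still runs in polynomial time. Decoding via Eulerian tour and shortcutting is linear.

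\textbf{Main obstacle.} The step needing the most care is the bound $W\le\sum_B(2c(T_B)-2\eta_{e_B})$. The embedded basis must be valid, meaning its union is a spanning tree in each of $G_1$ and $G_2$. It must also select exactly one of the client and marker lines for every required edge: each $e\in B\setminus\{e_B\}$ is traversed within a path, so its client line is used, while $e_B$ gets its marker line. Only then does it incur the same weight reduction as the chosen minimizer, so that comparing original weights is legitimate. I would check this directly, mirroring Lemma~\ref{lem:embedding}. The key point is that the two paths obtained by deleting $e_B$ from $T_B$ end at the endpoints of $e_B$, and the marker line joins exactly those two components to $r_2$.

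Everything else in the argument is arithmetic.
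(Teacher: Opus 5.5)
Your proposal is correct and matches the paper's argument: the paper also combines the RPP-tour bound $c(T)\le W\le 2\OPT-\ell_E(\bm{q})$ with the tour-partition guarantees $c(T)+\ell_E(\bm{q})$ and $c(T)+2\ell_E(\bm{q})$ of \citet{journal:jansen:NETWORKS:93} and with $\ell_E(\bm{q})\le\OPT$. The step you single out is the same one the paper relies on: the embedded basis selects exactly one of the client and marker lines for each required edge, so its weight reduction equals that of the minimizer.
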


These results improve the previous approximation ratios of $2.5$ and $3.5$ for the unit-demand and unsplittable variants, respectively~\citep{journal:jansen:NETWORKS:93}.
For unsplittable deliveries, we further apply the LP-rounding method of \citet{journal:friggstad:MOR:25} to the tour-partition algorithm of \citet{journal:jansen:NETWORKS:93}.
We claim that this yields a solution of cost at most $c(T)+(1+\varepsilon)\ell_E(\bm{q})+\ln2\cdot\OPT$, improving the approximation ratio to $2+\ln2+\varepsilon$.

\bibliographystyle{apalike}
\bibliography{VRPSD}
\end{document}